\newtheorem{definition}{Definition}
\newtheorem{theorem}{Theorem}
\newtheorem{remark}{Remark}
\newtheorem{lemma}{Lemma}
\newtheorem{assumption}{Assumption}
\newcommand{\const}{\text{const}}
\renewcommand\and{
  \end{tabular}%
  \hfill
  \begin{tabular}[t]{>{\centering\arraybackslash}p{.45\textwidth}}}
\begin{document}

\title{Dual Euler--Poincar\'e/Lie--Poisson formulation of subinertial stratified thermal ocean flow with identification of Casimirs as Noether quantities}

\author{F.J.\ Beron-Vera\\ Department of Atmospheric Sciences\\ Rosenstiel School of Marine, Atmospheric \& Earth Science\\ University of Miami\\ Miami, Florida, USA\\ fberon@miami.edu \and E.\ Luesink\\ Korteweg-De Vries Institute\\ University of Amsterdam\\ Amsterdam, The Netherlands\\ e.luesink@uva.nl}

\date{Started: May 9, 2024. This version: \today.\vspace{-0.25in}}

\maketitle

\begin{abstract} 
  This paper investigates the geometric structure of a quasigeostrophic approximation to a recently introduced reduced-gravity thermal rotating shallow-water model that accounts for stratification. Specifically, it considers a low-frequency approximation of a model for flow above the ocean thermocline, governed by primitive equations with buoyancy variations in both horizontal and vertical directions. Like the thermal model, the stratified variant generates circulation patterns reminiscent of submesoscale instabilities visible in satellite images. An improvement is its ability to model mixed-layer restratification due to baroclinic instability.
    
  The primary contribution of this paper is to demonstrate that the model is derived from an Euler--Poincaré variational principle, culminating in a Kelvin--Noether theorem, previously established solely for the primitive-equation parent model. The model's Lie--Poisson Hamiltonian structure, earlier obtained through direct calculation, is shown to result from a Legendre transform with the associated geometry elucidated by identifying the relevant momentum map.
    
  Another significant contribution of this paper is the identification of the Casimirs of the Lie--Poisson system, including a newly found weaker Casimir family forming the kernel of the Lie--Poisson bracket, which results in potential vorticity evolution independent of buoyancy details as it advects under the flow. These conservation laws related to particle relabeling symmetry are explicitly linked to Noether quantities from the Euler--Poincar\'e principle when variations are not constrained to vanish at integration endpoints.
    
  The dual Euler--Poincar\'e/Lie--Poisson formalism provides a unified framework for describing quasigeostrophic reduced-gravity stratified thermal flow, mirroring the approach used in the primitive-equation setting.
\end{abstract}

\tableofcontents

\section{Introduction}

As the upper ocean absorbs heat from a warming troposphere due to anthropogenic activities, an increase in lateral buoyancy gradients is expected. Recent research \cite{Holm-etal-21, Beron-21-POFa, Beron-24-POFa} has highlighted how the misalignment between the lateral gradient of buoyancy (temperature) and that of the mixed-layer thickness contributes to the proliferation of submesoscale (1–10 km) circulations (Fig.\@~\ref{fig:kelvin}, left panel). Commonly observed in satellite ocean color images (Fig.\@~\ref{fig:kelvin}, right panel), such circulations are manifestations of thermal instabilities \cite{Gouzien-etal-17}, ageostrophic phenomena consequential for turbulent transport and energy dissipation \cite{Fu-Ferrari-09}. These are characterized by a cascade of Kelvin-Helmholtz-like vortices which roll up along fronts, phenomenon that resembles the stretching and folding observed in Rayleigh–B\'enard convection in incompressible Euler--Boussinesq flow on a vertical plane \cite{Holm-Wei-23}.

\paragraph{Thermal ocean modeling.}

The modeling framework for the above theoretical development is provided by the \emph{thermal} rotating shallow-water equations, also known as \emph{Ripa's equations} \cite{Obrien-Reid-67, Schopf-Cane-83, Ripa-GAFD-93, Ripa-JFM-95, Dellar-03}. The rotating shallow-water equations represent a paradigm for ocean dynamics on timescales longer than a few hours \cite{Zeitlin-18}. These equations are derived by vertically integrating the primitive equations (PE), specifically the hydrostatic Euler-Boussinesq equations with Coriolis force, for a homogeneous layer (HL) of fluid, where the horizontal velocity is replaced by its vertical average. Using the notation introduced in \cite{Ripa-GAFD-93}, we refer to this model as HLPE.

The thermal rotating shallow-water equations follow similarly but start from the PE for a horizontally \emph{inhomogeneous} layer (IL) of fluid, where the horizontal velocity is assumed to be vertically uniform. As in \cite{Ripa-JFM-95}, we refer to this model as IL$^0$PE, where the superscript indicates that both horizontal velocity and density are depth independent. Unlike HLPE, the IL$^0$PE in a \emph{reduced-gravity setting}---where the layer of active fluid is bounded above by a rigid lid and below by a soft interface with an infinitely deep, inert abyss---allows for a simplified representation of upper-ocean dynamics \emph{and} thermodynamics, as this model can accommodate heat and freshwater fluxes across the ocean surface.  Furthermore, the low-frequency-dominant thermal-wind balance allows the IL$^0$PE to implicitly include vertical velocity shear, enabling the model to partially represent baroclinic instability.

\begin{figure}[t!]
  \centering%
  \includegraphics[width=.49\textwidth]{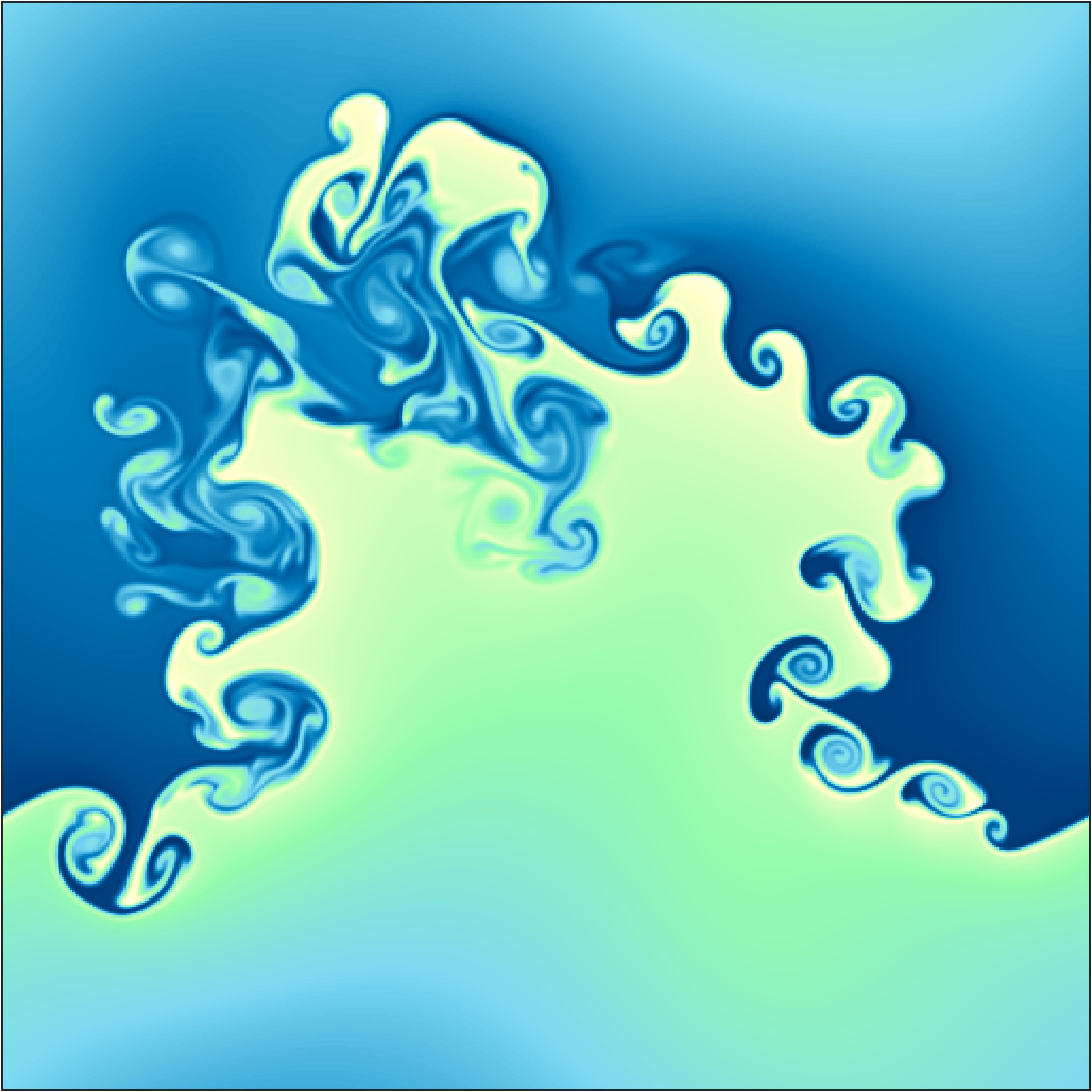}\,
  \includegraphics[width=.49\textwidth]{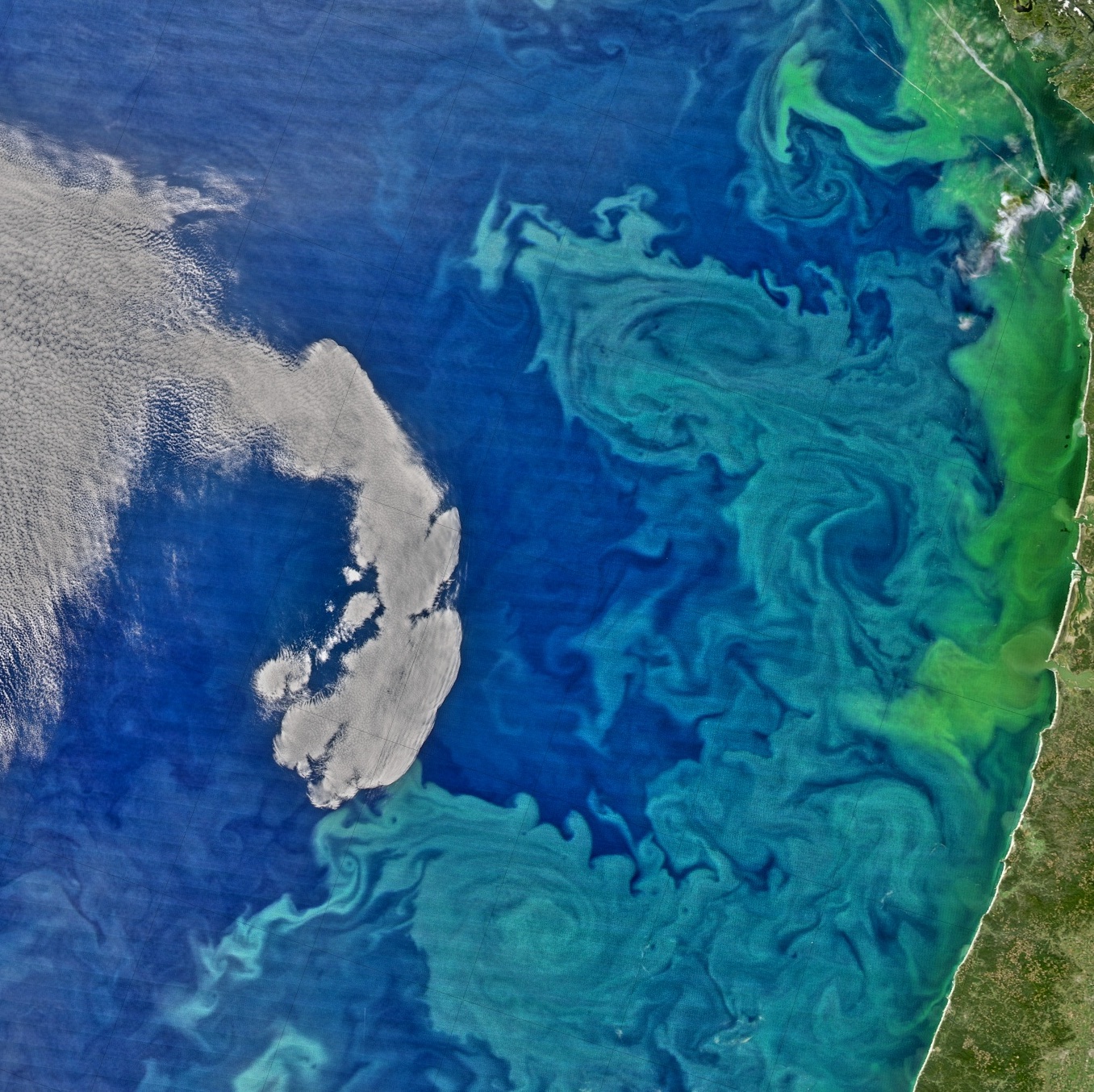}
  \caption{(left) Emerging Kelvin--Helmholtz-like vortices rolling up along a density front in a direct numerical simulation of the IL$^0$QG in a doubly periodic domain of size $R \approx 25$ km, corresponding to the (baroclinic) Rossby radius of deformation.  These vortices are low-frequency manifestations of inherently ageostrophic thermal instabilities.  (right) View of similar phytoplankton patterns on 30 March 2016 in the northeastern Pacific as composed using data acquired by the MODIS (Moderate Resolution Imaging Spectroradiometer) sensor mounted on the \emph{Aqua} satellite and the VIIRS (Visible Infrared Imaging Radiometer Suite) sensors mounted on the \emph{NOAA 20} and \emph{Suomi-NPP} satellites. Image credit: NASA Ocean Color Web (\href{https://oceancolor.gsfc.nasa.gov/gallery/482/} {https://oceancolor.gsfc.nasa.gov/gallery/482/}).} 
  \label{fig:kelvin}
\end{figure}

In \cite{Beron-21-POFb}, the IL$^0$PE was extended to include \emph{stratification} in the form of a polynomial of arbitrary degree $\alpha$ in the vertical coordinate, while maintaining its two-dimensional nature. This extended model is referred to as IL$^{(0,\alpha)}$PE, where the first slot indicates that the velocity does not explicitly vary vertically and the second slot denotes the degree of vertical variation allowed for the density. (The IL$^{(0,0)}$PE corresponds to the IL$^0$PE, and the IL$^{(0,1)}$PE appeared in a three-layer model for equatorial dynamics developed in \cite{Schopf-Cane-83}.) The IL$^{(0,\alpha)}$PE enhances the physics of the IL$^0$PE by facilitating the representation of additional processes, notably mixed-layer restratification by baroclinic instability \cite{Boccaletti-etal-07}.

More vertical variation might be added while maintaining the two-dimensional structure of the HLPE, which is essential for facilitating basic physical understanding. This has been the main motivation for developing the IL$^0$PE. For instance, an IL$^{(\hat\alpha,\alpha)}$PE would include vertical velocity shear in the form of a polynomial up to degree $\hat\alpha$. In \cite{Ripa-JFM-95}, an IL$^{(1,1)}$PE or IL$^1$PE was developed in an attempt to better represent the dynamics of the fully three-dimensional PE, which in the notation above would be the IL$^\infty$PE. Nonetheless, the IL$^1$PE, or more broadly the IL$^{(\hat\alpha,\alpha)}$PE, still awaits proof of exhibiting the geometric structure that this paper aims to investigate.

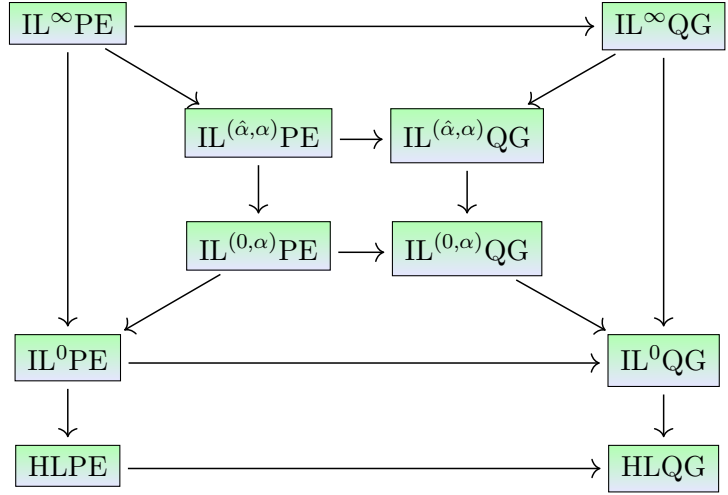
\begin{figure}[h!]
   \centering
   \begin{tikzcd}
      [row sep = 2em, 
      column sep = 2em, 
      cells = {nodes={top color=green!30, bottom color=blue!10,draw=black}},
      arrows = {draw = black, rightarrow, line width = .02cm, , shorten <= 1mm, shorten >= 1mm}]
      \mathrm{IL}^\infty \mathrm{PE} \arrow[ddd] \arrow[dr] \arrow[rrr] & & & \mathrm{IL}^\infty\mathrm{QG} \arrow[dl] \arrow[ddd] \\
      & \mathrm{IL}^{(\hat{\alpha},\alpha)}\mathrm{PE} \arrow[d] \arrow[r] & \mathrm{IL}^{(\hat{\alpha},\alpha)}\mathrm{QG} \arrow[d] & \\
      & \mathrm{IL}^{(0,\alpha)}\mathrm{PE} \arrow[dl] \arrow[r] & \mathrm{IL}^{(0,\alpha)}\mathrm{QG} \arrow[dr] & \\
      \mathrm{IL}^0 \mathrm{PE} \arrow[d] \arrow[rrr] & & & \mathrm{IL}^0\mathrm{QG} \arrow[d] \\
      \mathrm{HLPE} \arrow[rrr] & & & \mathrm{HLQG}
   \end{tikzcd}
   \caption{Overview of the models discussed in this work. Each arrow indicates an approximation. The most general model is the IL$^\infty$PE model, which represents the fully three-dimensional primitive equations, where IL stands for inhomogeneous layer. The IL$^\infty$QG model is the quasi-geostrophic limit of IL$^\infty$PE and is also a fully three-dimensional model. The interior of the diagram features models that have polynomial approximations of vertical shear and stratification. The order of these polynomials is represented by $\hat{\alpha}$ for vertical shear and by $\alpha$ for stratification. IL$^0$ indicates that the model has no vertical shear or stratification, but does include horizontal variations of buoyancy. The homogeneous layer models indicated by HL have no horizontal variations of buoyancy.}
  \label{fig:introductiontree}
\end{figure}

\paragraph{Geometric structure.}

The IL$^{(0,\alpha)}$PE was shown in \cite{Beron-21-POFb} to admit \emph{Euler--Poincar\'e variational formulation} and possess \emph{Lie--Poisson Hamiltonian structure}. By Euler--Poincar\'e variational principle, we refer to a Hamilton's principle for fluids that leads to the motion equations in Eulerian variables. This is achieved by expressing the Lagrangian in terms of Eulerian variables and then extremizing the corresponding action under constrained variations, which represent fluid particle path variations at fixed Lagrangian labels and time. This principle, as described, was apparently first discussed in \cite{Newcomb-62}. However, it is part of a much broader variational formulation of mechanics, written in the abstract language of differential geometry by \cite{Holm-etal-98a, Holm-etal-02}, who made connections with the seminal work of Henri Poincar\'e \cite{Poincare-10}, built on earlier work by Hamilton and Lie.

By Lie--Poisson Hamiltonian structure, we refer to a type of noncanonical Hamiltonian representation of the equations of motion in terms of Eulerian variables, as discussed by \cite{Morrison-Greene-80}. The abstract formulation, rooted in previous work by \cite{Arnold-66b}, is due to \cite{Marsden-Weinstein-82, Marsden-Weinstein-83}. In particular, \cite{Marsden-etal-84} demonstrated how to derive the Euler equation for compressible fluid motion as a Lie--Poisson Hamiltonian system through reduction by symmetry of the corresponding canonical Hamiltonian formulation, i.e., the Euler equation written in Lagrangian variables.

\paragraph{Goal of the paper and organization.}

The Lie--Poisson Hamilton equations are connected to the Euler--Poincar\'e equations via a (generally partial) \emph{Legendre transform}. This connection has been elusive for the quasigeostrophic (QG) approximation to the PE, which represents a sub-Coriolis or inertial frequency approximation to the PE. Recently, \cite{Luesink-etal-24} clarified this link for the HLQG. Building on their work, we develop an Euler--Poincar\'e variational principle for the IL$^{(0,\alpha)}$QG in this paper and derive the IL$^{(0,\alpha)}$QG in Lie--Poisson Hamiltonian form through a Legendre transform.  Previously, this derivation was performed by direct manipulation in \cite{Beron-21-POFb}. An important byproduct of the Euler--Poincar\'e variational formalism is the obtention of a \emph{Kelvin--Noether theorem} for the circulation of an appropriately modified velocity along a material loop, generalizing the one derived in \cite{Beron-Olascoaga-24} for the IL$^{(0,1)}$QG by direct manipulation.  A key feature of the Lie--Poisson formulation is the presence of conservation laws, originally identified by Sophus Lie as ``distinguished functionals'' \cite{Lie-80} and now commonly referred to as \emph{Casimirs} \cite{Marsden-Ratiu-99} following nomenclature introduced, apparently, in \cite{Sudarshan-Makunda-74}, which commute with any function within the Lie--Poisson bracket.  These conservation laws are not tied to explicit symmetries like energy or momentum but are related to symmetries under the relabeling of fluid particles, which are hidden within the Eulerian variables. We identify the Casimirs of the system and we explicitly recognize them as Noether quantities arising from the Euler--Poincar\'e variational principle.  (The connection between material conservation of vorticity or potential vorticity in geophysical context and particle relabeling symmetry has a long history \cite{Newcomb-67, Ripa-AIP-81, Salmon-82, Henyey-82, Padhye-Morrison-96a}. Our analysis is motivated by \cite{Cotter-Holm-12}, who explored boundary terms in the Euler--Poincar\'e variational principle but did not establish a link to these Noether quantities as we do here.)  The approach taken in this paper is mainly algebraic, with the differential geometry components  restricted to a section that can be skipped without breaking the flow, unless the reader wishes to explore the geometric nature of the Legendre transform and its related \emph{momentum map}, as well as understand the origins of the term ``Kelvin--Noether.''  

The remainder of the paper is organized as follows. In Sec.\@~\ref{sec:ILQG}, a review of the IL$^{(0,\alpha)}$QG is presented. In Sec.\@~\ref{sec:prep} we outline a few basic assumptions and provide a definition that will allow us to formally and selfconsistently build a dual Euler--Poincaré/Lie--Poisson formulation of the IL$^{(0,\alpha)}$QG. Section \ref{sec:EP} is devoted to the elaboration of the Euler--Poincar\'e variational principle for the IL$^{(0,\alpha)}$QG, culminating in the establishment of a Kelvin--Noether circulation theorem. The discussion in Sec.\@~\ref{sec:LP} focuses on the Legendre transform, which leads to the derivation of IL$^{(0,\alpha)}$QG in the Lie-Poisson Hamiltonian form. Nother's theorem for generalized Hamiltonian systems is treated in Sec.\@~\ref{sec:conslaws} in connection of energy conservation and the emergence of Casimir invariants. Particle relabeling symmetry and conservation of Casimirs are treated in Sec.\@~\ref{sec:Noether}. In Sec.\@~\ref{sec:geomech}, a geometric mechanics interpretation of the findings from the previous sections is provided, offering generalization. The paper ends with a recap and recommendations for future research in Sec.\@~\ref{sec:concl}.

\section{The IL$^{(0,\alpha)}$QG}\label{sec:ILQG}

Let $\mathbf x = (x,y) \in \mathbb R^2$ denote the position in a domain $D$ of the $\beta$-plane with external unit normal $\hat{\mathbf n}$ to its boundary, $\partial D$. (More complex geometrical configurations, such as those with multiple connections, can be handled with minimal additional effort.)  Assume a reduced-gravity setting. Let
\begin{equation}
  R := \frac{\sqrt{g'H}}{|f_0|},
  \label{eq:Rd}
\end{equation}
where $H$ represents the thickness of the active layer in a \emph{reference state} with no motion, $f_0$ stands for the mean Coriolis parameter, and $g' > 0$ is a parameter to be identified later. Let further
\begin{equation}
  S := \frac{N_0^2H}{2g'}
  \label{eq:S}
\end{equation}
such that $0 < S < 1$, where $N_0 > 0$ is another parameter to be identified.

\begin{center}
\fbox{%
\begin{minipage}{\textwidth}
The IL$^{(0,\alpha)}$QG, $\alpha \in \mathbb Z_0^+$, in the above setting is given by \cite{Beron-21-POFb}
\begin{equation}
  \partial_t \bar\xi + [\bar\psi,\bar\xi] = R^{-2}_\alpha[\bar\psi,\nu(\psi_\sigma, \psi_{\sigma^2}, \dotsc,\psi_{\sigma^{\alpha+1}})],\quad \partial_t \psi_{\sigma^n} + [\bar\psi,\psi_{\sigma^n}] = 0,
\label{eq:ILQG}
\end{equation}
$n=1,2,\dotsc,\alpha+1$, where 
\begin{equation}
  \nu(\psi_\sigma, \psi_{\sigma^2}, \dotsc, \psi_{\sigma^{\alpha+1}}) := \psi_\sigma - \sum_{n=1}^\alpha(n+1)\overline{\sigma^{n+1}}\psi_{\sigma^{n+1}}
  \label{eq:nu}
\end{equation}
with 
\begin{equation}
  \bar\psi = (\nabla^2-R^{-2}_\alpha)^{-1}(\bar\xi - R^{-2}_\alpha \nu(\psi_\sigma, \psi_{\sigma^2}, \dotsc, \psi_{\sigma^{\alpha+1}}) - \beta y), 
  \label{eq:inv}
\end{equation}
which is subject to
\begin{equation}
  \nabla^\perp\bar\psi\cdot\hat{\mathbf n}\vert_{\partial D} = 0,\quad \frac{d}{dt}\oint_{\partial D} \nabla^\perp\bar\psi\cdot d\mathbf x = 0.
  \label{eq:bc}
\end{equation}
\end{minipage}
}
\end{center}

\noindent Here, 
\begin{equation}
  [a,b] := \hat{\mathbf z}\cdot\nabla a\times\nabla b =: \nabla^\perp a\cdot \nabla b = - \partial_ya\partial_xb + \partial_xa\partial_yb,
\end{equation}
where $\hat{\mathbf z}$ is the vertical unit vector, is the Jacobian of the map $\mathbf x \mapsto (a(\mathbf x),b(\mathbf x))$.  The parameter
\begin{equation}
  R^2_\alpha := \Big(1 - \tfrac{1}{2}\sum_{n=1}^\alpha \overline{\sigma^{n+1}} S\Big)R^2,
\end{equation}
where
\begin{equation}
  \sigma := 1 + 2\frac{z}{H}
\end{equation}
is a rescaled vertical coordinate that varies (linearly) from $+1$ at the surface ($z = 0$), where a rigid lid is placed, down to $-1$ at the bottom of the active layer, which in the QG limit, clarified below, effectively coincides with that of the reference state, lying at $z = - H$.  The overbar denotes a vertical average across this range.  Finally, the inverse of
\begin{equation}
  \nabla^2-R^{-2}_\alpha = \partial_{xx} + \partial_{yy}- R^{-2}_\alpha 
\end{equation}
is interpreted in terms of the relevant Green function for the elliptic problem
\eqref{eq:inv}--\eqref{eq:bc}.

The IL$^{(0,\alpha)}$QG thus has $\alpha+2$ prognostic fields, given by $(\bar\xi, \psi_\sigma, \psi_{\sigma^2}, \dotsc, \psi_{\sigma^{\alpha+1}})$, which are assumed to be smooth in each of its arguments, $(\mathbf x,t)$.  These diagnose $\bar\psi(\mathbf x,t)$ via \eqref{eq:inv}, which defines the invertibility principle for the IL$^{(0,\alpha)}$QG.

\subsection{Physical interpretation of the model fields}

The velocity in the IL$^{(0,\alpha)}$PE is horizontal and vertically shearless.  We write this field as ${\overline{\mathbf u}}^h(\mathbf x,t)$, representing a vertically averaged field from $z = 0$ down to $z = -h(\mathbf x,t)$, the soft interface with the inert abyssal layer.  The buoyancy,
\begin{equation}
  \vartheta(\mathbf x,z,t) := -g\frac{\rho(\mathbf x,z,t) - \rho_\mathrm{inert}}{\rho_0},
\end{equation}
where $g$ denotes the acceleration due to gravity, $\rho$ represents the density of the active layer, $\rho_\mathrm{inert} = \mathrm{const}$ is the density of the inert layer, and $\rho_0$ is the density used in the Boussinesq approximation. This is written in the IL$^{(0,\alpha)}$PE as
\begin{equation}
  \vartheta(\mathbf x,z,t) = {\overline{\vartheta}}^h(\mathbf x,t) + \sum_{n=1}^\alpha (\sigma_h^n - {\overline{\sigma_h^n}}^h) \vartheta_{\sigma^n}(\mathbf x,t) 
  \label{eq:buoyancy}
\end{equation}
where
\begin{equation}
  \sigma_h := 1 + 2\frac{z}{h}.
\end{equation}
The coefficients of this expansion are materially conserved by the flow. 

Let $\mathrm{Ro} > 0$ be a small parameter taken to represent a Rossby number, measuring the strength of inertial and Coriolis forces, e.g.,
\begin{equation}
  \mathrm{Ro} = \frac{V}{|f_0|R} \ll 1, 
  \label{eq:Ro}
\end{equation}
where $V$ is a characteristic velocity. The QG scaling \cite{Pedlosky-87}  asserts that
\begin{equation}
  (|{\overline{\mathbf u}}^h|,h-H,\partial_t,\beta y) = O(\mathrm{Ro}\hspace{0.08334em} V, \mathrm{Ro}\hspace{0.08334em} R, \mathrm{Ro} f_0,  \mathrm{Ro} f_0).  
  \label{eq:QGscaling}
\end{equation}
Consistent with this scaling, with an $O(\mathrm{Ro}^2)$ error, we have that
\begin{align} 
  {\overline{\mathbf u}}^h &=  \nabla^\perp\bar\psi,\label{eq:u-IL}\\ h &= H + \frac{H}{f_0R_S^2}\left(\bar\psi - \psi_\sigma + \sum_{n=1}^\alpha\overline{\sigma^{n+1}}\psi_{\sigma^{n+1}}\right) \equiv H + \frac{H}{f_0R_S^2}(\bar\psi - \nu)\label{eq:h}\\ \bar\vartheta &= g' + \frac{2g'}{f_0R^2}\psi_\sigma,\\ \vartheta_\sigma &= \tfrac{1}{2}N_0^2H + \frac{4g'}{f_0R^2}\psi_{\sigma^2},\\ \vartheta_{\sigma^n} &= \frac{2(n+1)g'}{f_0R^2}\psi_{\sigma^{n+1}},
\end{align}
$n = 2,3,\dotsc,\alpha$. Finally, with an $O(\mathrm{Ro}^2)$ error, the potential vorticity in the IL$^{(0,\alpha)}$PE
\begin{equation}
  \frac{\nabla^\perp\cdot{\overline{\mathbf u}}^h + f}{h} = \frac{f_0 + \bar\xi}{H}.  
  \label{eq:q-IL}
\end{equation}

With the identifications \eqref{eq:u-IL}--\eqref{eq:q-IL}, the following interpretations apply. 
\begin{enumerate}
  \item Parameter \eqref{eq:S} measures stratification in the reference state characterized by $\bar\xi = \beta y$ and $\psi_{\sigma^n} = 0$, $n = 1,2,\dotsc,\alpha+1$, implying $\bar\psi = 0$, i.e., no motion.  Indeed, in that state, the stratification is uniform, with the buoyancy varying from $g'(1 - S)$ at the bottom of the layer to $g'(1 + S)$ at the surface.  Thus $g'(1 - S)$ represents the reference buoyancy at the base of the layer, with $g'$ representing the reduced gravity in the absence of reference stratification ($S=0$). Parameter $N_0$ in \eqref{eq:S} is the reference Brunt--V\"ais\"al\"a frequency as its squared is equal to the $z$-derivative of the reference buoyancy.

  \item Parameter \eqref{eq:Rd} is thus interpreted as the equivalent-barotropic Rossby radius of deformation of the system, approximately representing the gravest-baroclinic deformation radius in a model extending from the ocean surface down to the ocean floor.

  \item The equation on the left of \eqref{eq:ILQG} controls the evolution of IL$^{(0,\alpha)}$QG potential vorticity. This quantity is not materially conserved.  Note that the material derivative
  \begin{equation}
    \frac{D}{Dt} = \partial_t + \nabla^\perp\bar\psi\cdot\nabla = \partial_t + [\bar\psi,\,\,].
  \end{equation}
  Thus $\bar\xi$ is created (or annihilated) by the misalignment between the gradients of buoyancy and layer thickness. This is consistent with the lack of material conservation of Ertel's $\frac{z}{h}$-potential vorticity in the IL$^\infty$PE, ${\overline{q}}^h$, as obtained when the horizontal velocity in that model is replaced by ${\overline{\mathbf u}}^h$; the IL$^{(0,\alpha)}$PE potential vorticity is proportional to ${\overline{q}}^h$ \cite{Ripa-JFM-95}.

  \item The remaining equations in \eqref{eq:ILQG} are statements of material conservation of the vertical average, vertical derivative, etc., of the buoyancy.

  \item The boundary conditions \eqref{eq:bc} represent zero-flow across $\partial D$ and constancy of Kelvin circulation along $\partial D$.
\end{enumerate}

Finally, by the thermal-wind balance, the velocity in the IL$^{(0,\alpha)}$QG has implicit vertical shear, which motivates the streamfunction notations for the buoyancy \cite{Beron-21-POFb}. Specifically, the buoyancy distribution \eqref{eq:buoyancy} implicitly implies that the velocity is determined, with an $O(\mathrm{Ro}^2)$ error, by the streamfunction
\begin{equation}
  \psi = \bar\psi +  \sum_{n=1}^{\alpha+1}(\sigma^n - \overline{\sigma^n})\psi_{\sigma^n}.
\end{equation}

\subsection{The IL$^0$QG as a special case of the IL$^{(0,\alpha)}$QG}

Making $\alpha = 0$, which means ignoring the terms $\psi_{\sigma^n}$, $n = 2,3,\dotsc,\alpha+1$, and setting $S = 0$, the IL$^{(0,\alpha)}$QG reduces to the IL$^0$QG.  Explicitly, the IL$^0$QG reads
\begin{equation}
  \partial_t\bar\xi + [\bar\psi,\bar\xi] = R^{-2}[\bar\psi,\psi_\sigma],\quad \partial_t\psi_\sigma + [\bar\psi,\psi_\sigma] = 0,
\end{equation}
where
\begin{equation}
  \bar\psi = (\nabla^2 - R^{-2})^{-1}(\bar\xi - R^{-2}\psi_\sigma - \beta y).
\end{equation}
The IL$^0$QG as above appears in \cite{Ripa-DAO-99, Beron-21-POFa} and in nondimensional form in \cite{Warneford-Dellar-14}.

\begin{remark}\label{rem:tqg}
  The ``TQG'' discussed in \textup{\cite{Holm-etal-21}} reads, in dimensional variables and ignoring topographic forcing, as
  \begin{equation}
	\partial_t(\bar q + R^{-2}\psi_\sigma) + [\bar\psi,\bar q] = 0,\quad \partial_t\psi_\sigma + [\bar\psi, \psi_\sigma] = 0,
  \end{equation}
  where
  \begin{equation}
	 \bar q := \bar\xi - R^{-2}\psi_\sigma.
  \end{equation}
  The TQG thus is identical to the IL$^0$QG.  
\end{remark}

\begin{remark}\label{rem:ilqgm}
  The ``ILQGM'' discussed in \textup{\cite{Ripa-RMF-96}} is a quasigeostrophic approximation to the IL$^0$PE that differs from that one that leads the IL$^0$QG in that it considers the most general motionless reference state in the IL$^0$PE, which is characterized by
  \begin{equation}
	 \bar\Theta(\mathbf x)/\kappa(\mathbf x)^2 = g'.
	 \label{eq:kappa}
  \end{equation}
  Here, $\bar\Theta(\mathbf x)$ is the buoyancy in the reference state and $H/\kappa(\mathbf x)$ gives the layer thickness in that state.  With this in mind, the ILQGM reads
  \begin{equation}
	\partial_t\bar\xi + \kappa[\bar\psi,\bar\xi] = \kappa R^{-2}\left[\bar\psi, \kappa\psi_\sigma\right],\quad \partial_t\psi_\sigma + \left[\bar\psi, \kappa\psi_\sigma\right] = 0, 
	\label{eq:ILQGM}
  \end{equation}
  where
  \begin{equation}
	 \bar\psi = (\nabla\cdot\kappa\nabla - R^{-2})^{-1}(\kappa^{-1}\bar\xi - R^{-2}\psi_\sigma - \beta y).
  \end{equation}
  The IL$^0$QG follows the ILQGM upon setting $\kappa = 1$.
\end{remark}

\subsection{Invariant sub-dynamics of IL$^{(0,\alpha)}$QG dynamics}

If the IL$^{(0,\alpha)}$QG is initialized from $\psi_{\sigma^n} = \const$, $n = 1, 2, \dotsc, \alpha+1$, then these variables preserve their initial constant values all the time.  In other words, the subspace $\{\psi_\sigma, \psi_{\sigma^2}, \dotsc, \psi_{\sigma^{\alpha+1}} = \const\}$ represents an invariant subspace of the IL$^{(0,\alpha)}$QG. The dynamics on this invariant subspace is formally the same as that of the HLQG, in which case the potential vorticity, given by $\bar\xi = \nabla^2\bar\psi - R^{-2}\bar\psi + \beta y$, is materially conserved.  This holds formally because $\psi_{\sigma^n}$, $n = 1, 2, \dotsc, \alpha+1$, represent perturbations on a reference uniform stratification.  This is reflected in the IL$^{(0,\alpha)}$QG through the stratification parameter $S$. The HLQG and IL$^{(0,\alpha)}$QG potential vorticities on $\{\psi_\sigma, \psi_{\sigma^2}, \dotsc, \psi_{\sigma^{\alpha+1}} = \const\}$ differ, except for unimportant constants, by $R_\alpha$ being smaller than $R$ being smaller for $S>0$.

If the IL$^{(0,\alpha)}$QG is initialized from $\psi_{\sigma^n} = \const$, $n = 2, \dotsc, \alpha+1$, then these quantities are preserved for all time.  The dynamics on this invariant subspace is formally the same as that of the IL$^0$QG, with the caveats noted above.

In particular, if the IL$^0$QG is initialized with $\psi_\sigma = \mathrm{const}$, this is preserved for all time by material conservation of $\psi_\sigma$. The dynamics on the $\{\psi_\sigma = \mathrm{const}\}$ subspace coincide with that of the HLQG, exactly.

%

\section{Preparation}\label{sec:prep}

A few considerations are required in order to establiblish the dual Euler--Poincar\'e/Liee-Poisson formulation of the IL$^{(0,\alpha)}$QG.

\begin{assumption}\label{ass:cir-cons}
  Let 
  \begin{equation}
	\gamma := \oint_{\partial D} \nabla^\perp\bar\psi\cdot d\mathbf x = 0
  \end{equation}
  be the circulation of the velocity along the boundary of the flow domain, $\partial D$.  We will assume that $\gamma$ is constant, namely,
  \begin{equation}
      \dot\gamma = 0.
  \end{equation}
\end{assumption}

The imposition of the condition $\dot\gamma = 0$ is essential to ensure that the IL$^{(0,\alpha)}$QG, in the absence of external forcing and dissipative effects, conserves energy, as explictly shown below. An exception to this arises when $R_\alpha\to\infty$, implying that the lower boundary of the active fluid layer behaves as a rigid interface. In such a case, $\dot\gamma = 0$ is inherently satisfied by the system's dynamics. This is well-documented within the framework of QG theory \cite{Pedlosky-87}.

To further guarantee volume preservation, an additional assumption is needed.

\begin{assumption}\label{ass:vol}
  We will assume that
  \begin{equation}
	\frac{d}{dt}\int_D \bar\psi\,d^2x = 0.
  \end{equation}
\end{assumption}

A discussion on this requirement in the HLQG may be found in \cite{Gent-McWilliams-83}.  We will make explicit the implication for volume conservation in the IL$^{(0,\alpha)}$QG in the subsequent section.

The variational principles used in this paper are based on the concept of the variational derivative, which is first explained.

\begin{definition}\label{def:fun}
  Let $\mathcal{B}$ be a Banach space. If $\mathscr F:\mathcal{B}\to\mathbb{R}$ is a functional, denote by $\langle\,,\hspace{.05cm}\rangle$ the pairing on $\mathcal B $. Let $u,v \in \mathcal B$, then we define the \textbf{first variation of $\mathscr F\!\!\!\!\!\!\mathscr F$} using the Gateaux derivative as 
  \begin{equation}
	   \delta \mathscr F(u) := \left.\frac{d}{d\epsilon}\right\vert_{\epsilon = 0} \mathscr F(u + \epsilon v) = \left\langle\frac{\delta\mathscr F}{\delta u},v\right\rangle.
  \end{equation}
  This defines the \textbf{functional derivative of $\mathscr F\!\!\!\!\!\!\mathscr F$} uniquely as $\frac{\delta}{\delta u}\mathscr F \in \mathcal B^*$ since $v$ can be arbitrary. This function is called the \textbf{variation of $\boldsymbol u$} and is denoted as $\delta u$.  For more details, see \textup{\cite{Gelfand-Fomin-00}}. 
\end{definition}

The relevant pairing $\langle\,,\hspace{.05cm}\rangle : \mathcal B^* \times \mathcal B \to \mathbb R$ in our context is the $L^2$-pairing.    

\begin{assumption}\label{ass:cir-var}
  We will assume that the velocity circulation along $\partial D$ vanishes identically:
  \begin{equation}
      \delta\gamma = 0.
  \end{equation}
\end{assumption}

The restriction $\delta\gamma = 0$ allows for a variational formulation of the IL$^{(0, \alpha)}$QG using variational calculus consistent with Def.\@~\ref{def:fun}. Otherwise, the phase-space variables must be expanded (in the Hamiltonian formulation) to include the velocity circulation, as, for instance, done in \cite{Holm-etal-85}. However, this approach would require redefining the notion of a variational derivative differently from Def.\@~\ref{def:fun}. A proposal for this redefinition is presented in \cite{Lewis-etal-86}.

Finally, the variables in the IL$^{(0,\alpha)}$QG are (smooth, time-dependent) \emph{scalar-valued functions} on $D$.  \emph{We denote the space of such fields by $\mathcal F(D)$.}  

\section{Euler-Poincar\'e variational principle for the IL$^{(0,\alpha)}$QG}\label{sec:EP}

With the various considerations in the preceding section, we are ready to announce and prove our first theorem.

\begin{theorem}[Euler--Poincar\'e for IL$^{(0,\alpha)}$QG]\label{thm:EPforILQG}
  The IL$^{(0,\alpha)}$QG follows from \textbf{Euler--Poincar\'e' variational principle}, that is, a Hamilton's principle
  \begin{equation}
	 \delta\int_{t_0}^{t_1}\mathscr L(\bar\psi,\psi_\sigma,\psi_{\sigma^2},\dotsc,\psi_{\sigma^{\alpha+1}})\,dt = 0
    \label{eq:dS}
  \end{equation}
  constrained to
  \begin{equation}
	 \delta\bar\psi = \partial_t\eta + [\bar\psi,\eta],\quad
	 \delta\psi_{\sigma^n} = - [\eta,\psi_{\sigma^n}], 
	 \label{eq:cons}
  \end{equation}
  $n = 1,2,\dotsc,\alpha+1$, where $\eta$ vanishes at the endpoints of integration and is otherwise arbitrary, with \textbf{Lagrangian} defined by
  \begin{equation}
	 \mathscr L(\bar\psi,\psi_\sigma,\dotsc,\psi_{\sigma^{\alpha+1}}) := \frac{1}{2}\int_D |\nabla\bar\psi|^2 + R^{-2}_\alpha\bar\psi^2 - 2\beta y\bar\psi + 2R^{-2}_\alpha \nu(\psi_\sigma,\dotsc,\psi_{\sigma^{\alpha+1}})\bar\psi\,d^2x.
    \label{eq:L}
  \end{equation}
\end{theorem}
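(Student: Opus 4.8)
The plan is to take the first variation of the action integral $\int_{t_0}^{t_1}\mathscr L\,dt$ directly, substitute the constrained variations \eqref{eq:cons}, integrate by parts in space and time, collect the coefficients of the free function $\eta$, and read off the equations of motion. First I would compute $\delta\mathscr L$ using Def.~\ref{def:fun}: since $\mathscr L$ is quadratic, this is straightforward and yields
\begin{equation}
  \delta\mathscr L = \int_D \big(\nabla\bar\psi\cdot\nabla\delta\bar\psi + R^{-2}_\alpha\bar\psi\,\delta\bar\psi - \beta y\,\delta\bar\psi + R^{-2}_\alpha\nu\,\delta\bar\psi + R^{-2}_\alpha\bar\psi\,\delta\nu\big)\,d^2x,
\end{equation}
where $\delta\nu = \delta\psi_\sigma - \sum_{n=1}^\alpha(n+1)\overline{\sigma^{n+1}}\,\delta\psi_{\sigma^{n+1}}$ by linearity of $\nu$ in its arguments. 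Integrating the gradient term by parts over $D$ and invoking the boundary condition \eqref{eq:bc} (together with Assumptions~\ref{ass:cir-cons} and \ref{ass:cir-var}, which kill the resulting boundary circulation terms), the coefficient of $\delta\bar\psi$ becomes $-(\nabla^2 - R^{-2}_\alpha)\bar\psi - \beta y + R^{-2}_\alpha\nu$, which by the invertibility relation \eqref{eq:inv} is precisely $\bar\xi$. So $\delta\mathscr L = \int_D \bar\xi\,\delta\bar\psi\,d^2x + R^{-2}_\alpha\int_D\bar\psi\,\delta\nu\,d^2x$, and the second term rewritten in terms of $\delta\psi_{\sigma^n}$ pairs $\bar\psi$ (with appropriate $\overline{\sigma^{n+1}}$ weights) against each $\delta\psi_{\sigma^n}$.

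Next I would substitute the kinematic constraints. For the $\delta\bar\psi$ term I use $\delta\bar\psi = \partial_t\eta + [\bar\psi,\eta]$, integrate the $\partial_t\eta$ piece by parts in $t$ (the endpoint terms vanish since $\eta(t_0)=\eta(t_1)=0$), giving $-\int\!\!\int \partial_t\bar\xi\,\eta$, and handle the bracket piece via the integration-by-parts identity $\int_D a[b,c]\,d^2x = \int_D c[a,b]\,d^2x$ (valid up to boundary terms that vanish under \eqref{eq:bc}), turning $\int_D\bar\xi[\bar\psi,\eta]\,d^2x$ into $\int_D\eta[\bar\xi,\bar\psi]\,d^2x = -\int_D\eta[\bar\psi,\bar\xi]\,d^2x$. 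For the buoyancy terms I substitute $\delta\psi_{\sigma^n} = -[\eta,\psi_{\sigma^n}] = [\psi_{\sigma^n},\eta]$ and again move $\eta$ out using the same cyclic identity. Collecting everything, $\delta\int\mathscr L\,dt = -\int_{t_0}^{t_1}\!\!\int_D \eta\,\big(\text{[PV evolution residual]}\big)\,d^2x\,dt$, and since $\eta$ is arbitrary (modulo the endpoint and circulation constraints, which have been accounted for), the residual must vanish pointwise — this is the left equation of \eqref{eq:ILQG}, with the right-hand side $R^{-2}_\alpha[\bar\psi,\nu]$ emerging precisely from the buoyancy contributions to $\delta\mathscr L$. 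The advection equations $\partial_t\psi_{\sigma^n} + [\bar\psi,\psi_{\sigma^n}] = 0$ are then recovered as the integrability/consistency conditions on the constraint relations \eqref{eq:cons} (i.e., requiring $\delta$ and $\partial_t$ to commute appropriately on $\psi_{\sigma^n}$), or equivalently they are built into the advected-quantity structure of the Euler--Poincaré framework and need only be stated alongside the variational equation.

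The main obstacle I anticipate is bookkeeping the several boundary terms correctly: the spatial integration by parts of $|\nabla\bar\psi|^2$ and of every Jacobian bracket produces boundary integrals over $\partial D$, and I must show each one vanishes using the no-flux condition $\nabla^\perp\bar\psi\cdot\hat{\mathbf n}|_{\partial D}=0$, the Kelvin-circulation condition in \eqref{eq:bc}, and Assumptions~\ref{ass:cir-cons}--\ref{ass:cir-var}; in particular $\eta$ is not assumed to vanish on $\partial D$, only at the temporal endpoints, so the argument that $\delta\gamma=0$ is what legitimately discards the leftover circulation term needs to be made carefully rather than waved away. A secondary subtlety is confirming that the coefficient of $\delta\bar\psi$ is exactly $\bar\xi$ as defined through \eqref{eq:inv}, i.e., that the Legendre-type relation between $\bar\psi$ and the prognostic variable $\bar\xi$ is consistent with the Lagrangian \eqref{eq:L}; this is really the statement that $\bar\xi = \delta\mathscr L/\delta\bar\psi$ up to the advected-buoyancy correction, and it is what ties the variational principle to the stated invertibility principle. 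Once those boundary and identification issues are dispatched, the extraction of \eqref{eq:ILQG} from the arbitrariness of $\eta$ is routine.
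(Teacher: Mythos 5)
Your proposal follows essentially the same route as the paper's proof: vary the action, substitute the constrained variations \eqref{eq:cons}, integrate by parts in time and in space (using the skew-adjointness of $[\bar\psi,\cdot\,]$ under the no-flux condition and Assumption~\ref{ass:cir-var}), identify the functional derivative of $\mathscr L$ with respect to $\bar\psi$ with the potential vorticity via the invertibility relation \eqref{eq:inv}, read off the $\bar\xi$ equation from the arbitrariness of $\eta$, and obtain the advection equations from the kinematic content of the constraints (material conservation under particle-path variations) rather than from stationarity. The one discrepancy worth flagging is a sign: the paper identifies $\delta\mathscr L/\delta\bar\psi=-\bar\xi$ rather than $+\bar\xi$, and that sign must be carried consistently through the bracket manipulations to land on \eqref{eq:ILQG}.
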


\begin{proof}
  The proof begins with the computation of the variation of the action
  $\int_{t_0}^{t_1}\mathscr L\,dt$:
  \begin{align}
	 \delta\int_{t_0}^{t_1}\mathscr L\,dt &= \int_{t_0}^{t_1}\left\langle\frac{\delta \mathscr L}{\delta\bar\psi},\delta\bar\psi\right\rangle + \sum_{n=1}^{\alpha+1} \left\langle\frac{\delta \mathscr L}{\delta\psi_{\sigma^n}}, \delta\psi_{\sigma^n}\right\rangle\, dt \nonumber\\ &=\int_{t_0}^{t_1}\left\langle\frac{\delta \mathscr L}{\delta\bar\psi},\partial_t\eta + [\bar\psi,\eta]\right\rangle + \sum_{n=1}^{\alpha+1} \left\langle\frac{\delta \mathscr L}{\delta\psi_{\sigma^n}}, [\psi_{\sigma^n},\eta]\right\rangle\,dt\nonumber \\ &= - \int_{t_0}^{t_1}\left\langle\partial_t\frac{\delta \mathscr L}{\delta\bar\psi} + \left[\bar\psi,\frac{\delta \mathscr L}{\delta\bar\psi}\right] + \sum_{n=1}^{\alpha+1} \left[\psi_{\sigma^n},\frac{\delta \mathscr L}{\delta\psi_{\sigma^n}}\right],\eta\right\rangle\,dt,
  \end{align}
  upon integrating by parts where we have used
  \begin{align}
	 \langle a,[\bar\psi,b]\rangle = \int_D a[\bar\psi,b]\,d^2x &= \int_D a\nabla^\perp\bar\psi\cdot\nabla b\,d^2x\nonumber\\ &= \oint_D ab\nabla^\perp\psi\cdot\hat{\mathbf n}ds - \int_D b\nabla\cdot a\nabla^\perp\bar\psi\,d^2x\nonumber\\ & = - \int_D b [\bar\psi,a]\,d^2x = - \langle b,[\bar\psi,a]\rangle
  \label{eq:dual}
  \end{align}
  for every $a,b(\mathbf x)$. The penultimate equality holds by the no-flow condition through $\partial D$, given by the left equation in \eqref{eq:bc}.  The action is extremized for all $\eta$ (subject to $\eta(t_0) = \eta(t_1) = 0$) when
  \begin{equation}
	 \partial_t\frac{\delta \mathscr L}{\delta\bar\psi} + \left[\bar\psi,\frac{\delta \mathscr L}{\delta\bar\psi}\right] = \sum_{n=1}^{\alpha+1} \left[\frac{\delta \mathscr L}{\delta\psi_{\sigma^n}},\psi_{\sigma^n}\right].
	 \label{eq:EPforILQG}
  \end{equation}
  Then one computes
  \begin{align}
	 \delta\mathscr L &= \int_D\nabla\bar\psi\cdot\nabla\delta\bar\psi + R_\alpha^{-2}\bar\psi\delta\bar\psi - \beta y\delta\bar\psi + R_\alpha^{-2}\nu\delta\bar\psi + R_\alpha^{-2}\bar\psi\sum_{n=1}^{\alpha+1}\frac{\partial\nu}{\partial \psi_{\sigma^n}}\delta \psi_{\sigma^n}\,d^2x\nonumber\\ &= \int_D \left(-\nabla^2\bar\psi + R_\alpha^{-2}\bar\psi - \beta y + R_\alpha^{-2}\nu\right)\delta\bar\psi + R_\alpha^{-2}\bar\psi\left(\delta\psi_\sigma - \sum_{n=1}^\alpha (n+1)\overline{\sigma^{n+1}}\delta \psi_{\sigma^n}\right)\,d^2x\nonumber\\ &= \langle -\bar\xi,\delta\bar\psi \rangle + \langle R_\alpha^{-2}\bar\psi,\delta\psi_\sigma \rangle - \sum_{n = 1}^\alpha\left\langle R_\alpha^{-2}(n+1)\overline{\sigma^{n+1}}\bar\psi,\delta\psi_{\sigma^n} \right\rangle,
  \end{align}
  upon integrating by parts with Assump.\@~\eqref{ass:cir-var} in mind. Thus,
  \begin{equation}
	 \frac{\delta\mathscr L}{\delta\bar\psi} = -\bar\xi,\quad \frac{\delta\mathscr L}{\delta\psi_\sigma} = R^{-2}_\alpha\bar\psi,\quad \frac{\delta\mathscr L}{\delta\psi_{\sigma^{n+1}}} = - R^{-2}_\alpha(n+1)\overline{\sigma^{n+1}}\bar\psi,
    \label{eq:dLdpsi}
  \end{equation}
  $n = 1,2,\dotsc,\alpha$. 

  The proof is completed upon noting that the constraint on $\delta\psi_{\sigma^n}$ ($n=1, 2, \dotsc, \alpha+1$) in \eqref{eq:cons} is equivalent to material conservation of $\psi_{\sigma^n}(\mathbf x,t)$. To see this, let $\psi_{\sigma^n}^t(\mathbf x) := \psi_{\sigma^n}(\mathbf x,t)$.  Material conservation means
  \begin{equation}
	 \psi_{\sigma^n}^t(\mathbf x) =  \psi_{\sigma^n}^0(\mathbf q),
	 \label{eq:mat}
  \end{equation}
  where $\mathbf q$ is the position occupied by a fluid particle at time $t = 0$. The position of this particle at a latter time $t$ is $\mathbf x$.  Now, let $\eta(\mathbf x,t)$ be defined by
  \begin{equation}
	 \nabla^\perp\eta(\mathbf x(t),t) := \delta\mathbf x(t).
	 \label{eq:eta}
  \end{equation}
  Then from \eqref{eq:mat} we compute
  \begin{equation}
	 \delta\vert_{\mathbf q,t}\psi_{\sigma^n}^t + \nabla \psi_{\sigma^n}^t\cdot \delta\mathbf x = 0.
  \end{equation}
  Using \eqref{eq:eta}, the constrains on $\delta\psi_{\sigma^n}$, $n = 1, 2, \dotsc, \alpha+1$, in \eqref{eq:cons} follow, finalizing the proof.
\end{proof}

\subsection{Clarification of the constraint on $\bar\psi$}

We believe it is instructive to clarify the constraint on $\bar\psi$ in \eqref{eq:cons}. We know that fluid particle trajectories $\mathbf x(t)$ obey
\begin{equation}
  \dot{\mathbf x}(t) = \nabla^\perp\bar\psi(\mathbf x(t),t),\quad \mathbf x(0) = \mathbf q.
  \label{eq:dxdt}
\end{equation}
Using the chain rule, the variation of this equation is
\begin{equation}
  \delta \dot{\mathbf x} = \delta\vert_{\mathbf q,t} \nabla^\perp\bar\psi + (\nabla\nabla^\perp\bar\psi)\delta\mathbf x.
\end{equation}
Similarly, the time derivative of \eqref{eq:eta} reads
\begin{equation}
  \partial_t\nabla^\perp\eta + (\nabla\nabla^\perp\eta)\dot{\mathbf x} = \frac{d}{dt}\delta\mathbf x.
\end{equation}
By commutativity of differentiation, $\frac{d}{dt}\delta\mathbf x = \delta\dot{\mathbf x}$. Using \eqref{eq:eta} and \eqref{eq:dxdt}, one finally finds
\begin{equation}
  \nabla^\perp\delta\bar\psi = \nabla^\perp\partial_t\eta + (\nabla^\perp\bar\psi\cdot\nabla)\nabla^\perp\eta - (\nabla^\perp\eta\cdot\nabla)\nabla^\perp\bar\psi = \nabla^\perp(\partial_t\eta + [\bar\psi,\eta]),
\end{equation}
which leads to the left equation in \eqref{eq:cons}.

\subsection{Kelvin--Noether's circulation theorem for the IL$^{(0,\alpha)}$QG}\label{sec:kelvin}

Let $D_t$ be a material region which at time $t=0$ occupied position $D_0$.  Let $J := \partial(\mathbf x)/\partial(\mathbf q)$ be the Jacobian of the transformation $\mathbf q\mapsto \mathbf x$, assumed to be smoothly invertible, so $J > 0$.  Recall the Euler formula of fluid mechanics,
\begin{equation}
  \frac{1}{J}\frac{DJ}{Dt} = \nabla\cdot \nabla^\perp\bar\psi =
  0.
\end{equation}
Then, by changing variables we compute
\begin{equation}
  \frac{d}{dt}\int_{D_t} \frac{\delta\mathscr L}{\delta\bar\psi}\,d^2x = \int_{D_0} \frac{D}{Dt}\left(\frac{\delta\mathscr L}{\delta\bar\psi}J\right)\,d^2q = \int_{D_t} \partial_t\frac{\delta \mathscr L}{\delta\bar\psi} + \left[\bar\psi,\frac{\delta \mathscr L}{\delta\bar\psi}\right]\,d^2x.
  \label{eq:change}
\end{equation}
Now, by \eqref{eq:EPforILQG} it follows that
\begin{equation}
  \frac{d}{dt}\int_{D_t}\frac{\delta\mathscr L}{\delta\bar\psi}\,d^2x = - \sum_{n=1}^{\alpha+1} \int_{D_t}\left[\psi_{\sigma^n},\frac{\delta \mathscr L}{\delta\psi_{\sigma^n}}\right]\,d^2x,
  \label{eq:kelnoe}
\end{equation}
which is \emph{Kelvin--Noether's circulation theorem for the IL$^{(0,\alpha)}$QG}. Explicitly,
\begin{equation}
  \frac{d}{dt}\int_{D_t}\bar\xi\,d^2x = R_\alpha^{-2}\int_{D_t}[\nu,\bar\psi]\,d^2x,
  \label{eq:kel}
\end{equation}
where the definition of $\nu$, given \eqref{eq:nu}, was used.  

Let $\mathbf f(\mathbf x)$ be such that $\nabla^\perp\cdot \mathbf f = \beta y$.  Then we have, by the invertibility principle \eqref{eq:inv}, that
\begin{equation}
  \mathcal K(D_t) := \int_{D_t}\bar\xi\,d^2x = \oint_{\partial D_t}\big(\nabla^\perp\bar\psi + \mathbf f - R_\alpha^{-2}\nabla^{-2} \nabla^\perp(\bar\psi - \nu)\big)\cdot d\mathbf x,
\end{equation}
which is an appropriate definition of the Kelvin circulation along a material loop.  In general, the preservation of this flow property is not ensured, as the right-hand side of \eqref{eq:kel} typically does not vanish due to the misalignment between the gradients of buoyancy and layer thickness. This misalignment generally results in the generation (or destruction) of circulation.  Exceptions arise when $\partial D_t$ is chosen to be isopycnic and when $\partial D_t$ is taken to be the solid boundary $\partial D$ of the flow domain. 

\begin{remark}\label{rem:vol}
    Note that when $D_t$ is taken to be $D$, one has
    \begin{equation}
       \mathcal K(D) = \gamma + \int_D \beta y - R_\alpha^{-2}(\bar\psi - \nu)\,d^2x
       \label{eq:KD}
    \end{equation}
    is a constant by the Kelvin--Noether theorem.  In fact, this motion integral represents a Casimir of the IL$^{(0,\alpha)}$QG,
    discussed below.
    
    By the Kelvin--Noether theorem, Assump.\@~\eqref{ass:cir-cons} on the preservation of the velocity circulation along $\partial D$, and Assump.\@~\eqref{ass:vol} on the conservation of the integral of the integral of $\bar\psi$ over $D$, it follows that
    \begin{equation}
        \int_D \nu\,d^2x = \mathrm{const}.
    \end{equation}
    This integral of motion, as will be seen below, also represents a Casimir of the IL$^{(0,\alpha)}$QG. Finally, using Assump.\@~\eqref{ass:vol} we have
    \begin{equation}
        \mathcal V := \int_D \bar\psi - \nu\,d^2x
    \end{equation}
    is a constant whose physical interpretation is that of volume preservation, giving sustain to the need of Assumps.\@~\eqref{ass:cir-cons} and \eqref{ass:vol}.  Indeed, according to \eqref{eq:h}, $\partial_t(\bar\psi - \nu) = 0$ is nothing but the lowest-order contribution in the Rossby number \eqref{eq:Ro} of the local law of volume conservation in the IL$^{(0,\alpha)}$PE, which is given by $\partial_t h + \nabla\cdot h\bar{\mathbf u}^h = 0$.  
    
  \end{remark}

Finally, to understand why \eqref{eq:kelnoe} is denoted as a Kelvin--Noether theorem, it is necessary to revisit certain concepts of geometric mechanics.  This is explored in Sec.\@~\ref{sec:geomech} and Sec.\@~\ref{sec:mom}, specifically.

\section{Legendre transform for the IL$^{(0,\alpha)}$QG}\label{sec:LP}

We are now ready to formulate our final theorem. 

\begin{theorem}\label{thm:LTforILQG}
  The IL$^{(0,\alpha)}$QG represents a \textbf{Lie--Poisson system}. Namely, for any functional $\mathscr F(\bar\xi,\psi_\sigma,\dotsc,\psi_{\sigma^{\alpha+1}})$,
  \begin{equation}
	 \dot{\mathscr F} = \{\mathscr F,\mathscr H\} := \left\langle\bar\xi,\left[\frac{\delta \mathscr F}{\delta\bar\xi},\frac{\delta \mathscr H}{\delta\bar\xi}\right]\right\rangle + \sum_{n=1}^{\alpha+1} \left\langle\psi_{\sigma^n}, \left[\frac{\delta \mathscr F}{\delta\bar\xi},\frac{\delta \mathscr H}{\delta\psi_{\sigma^n}}\right] + \left[\frac{\delta \mathscr F}{\delta\psi_{\sigma^n}},\frac{\delta \mathscr H}{\delta\bar\xi}\right] \right\rangle
	 \label{eq:LP}
  \end{equation}
  with \textbf{Hamiltonian} given by
  \begin{equation}
	 \mathscr H(\bar\xi,\psi_\sigma,\dotsc,\psi_{\sigma^{\alpha+1}}) := \langle-\bar\xi,\bar\psi\rangle - \mathscr L(\bar\psi,\psi_\sigma,\dotsc,\psi_{\sigma^{\alpha+1}}),
	 \label{eq:H}
  \end{equation}
  subject to the \textbf{admissibility conditions}
  \begin{equation}
	 \nabla^\perp\frac{\delta F}{\delta\bar\xi}\cdot\hat{\mathbf n}\vert_{\partial D} = 0,\quad \nabla^\perp\frac{\delta F}{\delta\psi_{\sigma^n}}\cdot\hat{\mathbf n}\vert_{\partial D} = 0,.
	 \label{eq:adm-IL}
  \end{equation}
  $n = 1, 2, \dotsc, \alpha+1$.
\end{theorem}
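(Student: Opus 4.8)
The plan is to realize \eqref{eq:LP} as a partial Legendre transform of the Euler--Poincar\'e system of Theorem~\ref{thm:EPforILQG}, transforming only the ``velocity-like'' field $\bar\psi$ while keeping the advected scalars $\psi_{\sigma^n}$ as passive parameters. First I would take the relation $\delta\mathscr L/\delta\bar\psi = -\bar\xi$ from \eqref{eq:dLdpsi} as the definition of the momentum conjugate to $\bar\psi$; the elliptic invertibility principle \eqref{eq:inv} makes the correspondence $\bar\psi \leftrightarrow \bar\xi$ (at fixed $\psi_{\sigma^n}$) a bijection, so $\mathscr H$ as defined in \eqref{eq:H} is a genuine functional of $(\bar\xi,\psi_\sigma,\dotsc,\psi_{\sigma^{\alpha+1}})$. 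Differentiating \eqref{eq:H} and using $\delta\mathscr L/\delta\bar\psi = -\bar\xi$, the contributions proportional to $\delta\bar\psi$ cancel identically, which yields
\begin{equation}
  \frac{\delta\mathscr H}{\delta\bar\xi} = -\bar\psi,\qquad
  \frac{\delta\mathscr H}{\delta\psi_{\sigma^n}} = -\frac{\delta\mathscr L}{\delta\psi_{\sigma^n}},
\end{equation}
the latter read off from \eqref{eq:dLdpsi}; substituting \eqref{eq:L} into \eqref{eq:H} then gives the explicit Hamiltonian, which I would record for completeness and to match the bracket obtained by direct calculation in \cite{Beron-21-POFb}.

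Next I would transcribe the dynamics into the new variables. Inserting $\delta\mathscr L/\delta\bar\psi=-\bar\xi$, $\bar\psi=-\delta\mathscr H/\delta\bar\xi$ and $\delta\mathscr L/\delta\psi_{\sigma^n}=-\delta\mathscr H/\delta\psi_{\sigma^n}$ into the Euler--Poincar\'e equation \eqref{eq:EPforILQG} and into the advection equations of \eqref{eq:ILQG} turns them into
\begin{equation}
  \partial_t\bar\xi = \Big[\frac{\delta\mathscr H}{\delta\bar\xi},\bar\xi\Big] + \sum_{n=1}^{\alpha+1}\Big[\frac{\delta\mathscr H}{\delta\psi_{\sigma^n}},\psi_{\sigma^n}\Big],\qquad
  \partial_t\psi_{\sigma^n} = \Big[\frac{\delta\mathscr H}{\delta\bar\xi},\psi_{\sigma^n}\Big].
\end{equation}
Applying the chain rule, $\dot{\mathscr F} = \langle\delta\mathscr F/\delta\bar\xi,\partial_t\bar\xi\rangle + \sum_n\langle\delta\mathscr F/\delta\psi_{\sigma^n},\partial_t\psi_{\sigma^n}\rangle$, and then repeatedly using the skew-pairing identity $\langle a,[b,c]\rangle = -\langle c,[b,a]\rangle$ (the same integration by parts as in \eqref{eq:dual}, now with $b$ standing for a functional derivative of $\mathscr H$ or $\mathscr F$) together with the antisymmetry $[a,b]=-[b,a]$ of the Jacobian bracket, the three resulting groups of terms reproduce the right-hand side of \eqref{eq:LP} term by term; skew-symmetry of that bracket is then manifest.

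The step I expect to be the main obstacle---and the reason the admissibility conditions \eqref{eq:adm-IL} must be postulated---is the control of the boundary terms in these integrations by parts. In \eqref{eq:dual} the boundary integral $\oint_{\partial D}ab\,\nabla^\perp\bar\psi\cdot\hat{\mathbf n}\,ds$ dropped out thanks to the no-flow condition \eqref{eq:bc} on $\bar\psi$; here the analogous manipulations instead produce $\oint_{\partial D}(\cdots)\,\nabla^\perp(\delta\mathscr H/\delta\bar\xi)\cdot\hat{\mathbf n}\,ds$ and similar integrals with $\delta\mathscr H/\delta\psi_{\sigma^n}$, $\delta\mathscr F/\delta\bar\xi$, $\delta\mathscr F/\delta\psi_{\sigma^n}$ in place of $\nabla^\perp\bar\psi$. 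I would verify that requiring these to vanish is exactly \eqref{eq:adm-IL}, that $\mathscr H$ itself is admissible (since $\delta\mathscr H/\delta\bar\xi=-\bar\psi$ and $\delta\mathscr H/\delta\psi_{\sigma^n}$ is a multiple of $\bar\psi$, admissibility for $\mathscr H$ reduces to the no-flow condition \eqref{eq:bc}), and that the whole construction is consistent with Assumptions~\ref{ass:cir-cons}--\ref{ass:cir-var}: in particular $\delta\gamma=0$ from Assumption~\ref{ass:cir-var} is what made the variational derivatives \eqref{eq:dLdpsi} clean, and relaxing it would force an enlargement of the phase space by the circulation $\gamma$ and a corresponding modification of the bracket.
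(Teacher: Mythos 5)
Your proposal follows essentially the same route as the paper: it defines $\mathscr H$ by the partial Legendre transform \eqref{eq:H}, obtains $\frac{\delta\mathscr H}{\delta\bar\xi}=-\bar\psi$ together with the derivatives with respect to the $\psi_{\sigma^n}$, substitutes the equations of motion into the chain rule for $\dot{\mathscr F}$, and closes the bracket by integration by parts whose boundary terms are killed precisely by the admissibility conditions \eqref{eq:adm-IL}. The only cosmetic difference is that the paper computes the functional derivatives by first reducing $\mathscr H$ to the explicit energy integral $\tfrac12\int_D|\nabla\bar\psi|^2+R_\alpha^{-2}\bar\psi^2\,d^2x$ and varying that, rather than invoking the abstract Legendre cancellation, and you should reconcile the sign in your relation $\frac{\delta\mathscr H}{\delta\psi_{\sigma^n}}=-\frac{\delta\mathscr L}{\delta\psi_{\sigma^n}}$ with the conventions of \eqref{eq:L} and \eqref{eq:inv}, since the paper's \eqref{eq:der} lists these derivatives with the same sign as \eqref{eq:dLdpsi}.
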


\begin{proof}
  We begin by noting that \eqref{eq:H} implies 
  \begin{align}
	 \mathscr H &= \int_D - \left(\nabla^2\bar\psi - R^{-2}_\alpha(\bar\psi - \nu) - \beta y\right)\bar\psi - \tfrac{1}{2} \left(|\nabla\bar\psi|^2 + R^{-2}_\alpha\bar\psi^2\right) - \beta y\bar\psi - R^{-2}_\alpha \nu\bar\psi \,d^2x,\nonumber\\ &= \int_D - \left(\nabla^2\bar\psi - R^{-2}_\alpha\bar\psi \right)\bar\psi - \tfrac{1}{2} \left(|\nabla\bar\psi|^2 + R^{-2}_\alpha\bar\psi^2\right) \,d^2x,\nonumber\\ &= - \bar\psi\vert_{\partial D} \oint_{\partial D} \nabla^\perp\bar\psi\cdot d\mathbf x + \int_D |\nabla\bar\psi|^2 + R^{-2}_\alpha\bar\psi^2 - \tfrac{1}{2}\big(|\nabla\bar\psi|^2 + R^{-2}_\alpha\bar\psi^2\big)\,d^2x\nonumber\\ &= \frac{1}{2}\int_D |\nabla\bar\psi|^2 + R^{-2}_\alpha\bar\psi^2\,d^2x + \operatorname{const},
    \label{eq:H-exp}
  \end{align}
  upon integration by parts with the no-flow condition through $\partial D$, left equation in \eqref{eq:bc}, in mind.  This represents, modulo an irrelevant constant, the energy of the IL$^{(0,\alpha)}$QG. Then we compute
  \begin{align}
	 \delta \mathscr H &= \int_D \nabla\bar\psi\cdot\nabla\delta\bar\psi + R^{-2}_\alpha\bar\psi\delta\bar\psi\nonumber\\ &= \bar\psi\vert_{\partial D} \oint \nabla^\perp\delta\bar\psi\cdot d\mathbf x + \int_D \bar\psi\big(-\nabla^2\delta\bar\psi + R^{-2}_\alpha\delta\bar\psi\big)\,d^2x\nonumber\\ &=  \int_D \bar\psi\left(R^{-2}_\alpha \delta\nu - \delta\bar\xi\right)\,d^2x\nonumber\\ &= \left\langle -\bar\psi,\delta\bar\xi\right\rangle + \left\langle R^{-2}_\alpha\bar\psi, \delta\psi_\sigma\right\rangle - \sum_{n=1}^\alpha \left\langle R^{-2}_\alpha (n+1) \overline{\sigma^{n+1}}\bar\psi, \delta\psi_{\sigma^n}\right\rangle,
  \end{align}
  as above but this time taking into account Assump.\@~\eqref{ass:cir-var} and the definition of $\nu$, given in \eqref{eq:nu}.  Thus we get
  \begin{equation}
	 \frac{\delta \mathscr H}{\delta\bar\xi} = -\bar\psi,\quad \frac{\delta \mathscr H}{\delta\psi_\sigma} = R^{-2}_\alpha\bar\psi,\quad \frac{\delta \mathscr H}{\delta\psi_{\sigma^n}} = -R^{-2}_\alpha(n+1)\overline{\sigma^{n+1}}\bar\psi,
	 \label{eq:der}
  \end{equation}
  $n = 1, 2, \dotsc, \alpha$, and consistently we compute
  \begin{equation}
	 \frac{\delta \mathscr L}{\delta\bar\psi} = \frac{\delta \langle -\bar\xi,\bar\psi \rangle}{\delta\bar\psi} - \frac{\delta \mathscr H}{\delta\bar\psi} = - \bar\xi
  \end{equation}
  since
  \begin{equation}
	 \mathscr H = \mathscr H(\bar\xi,\psi_\sigma,\dotsc,\psi_{\sigma^{\alpha+1}}).
  \end{equation}
  Finally, we compute
  \begin{align}
	 \dot{\mathscr F} {}=& \left\langle\frac{\delta \mathscr F}{\delta\bar\xi}, \partial_t\bar\xi\right\rangle + \sum_{n=1}^{\alpha+1} \left\langle\frac{\delta \mathscr F}{\delta\psi_{\sigma^n}}, \partial_t\psi_{\sigma^n}\right\rangle\nonumber\\ {}=& - \left\langle\frac{\delta \mathscr F}{\delta\bar\xi},[\bar\psi,\bar\xi - R^{-2}_\alpha\nu]\right\rangle - \sum_{n=1}^{\alpha+1} \left\langle\frac{\delta \mathscr F}{\delta\psi_{\sigma^n}}, [\bar\psi, \psi_{\sigma^n}]\right\rangle \nonumber\\ {}=& \left\langle\frac{\delta \mathscr F}{\delta\bar\xi}, \left[\frac{\delta \mathscr H}{\delta\bar\xi},\xi\right] + R^{-2}_\alpha\left[\left( 1 - \sum_{n=1}^\alpha (n+1) \overline{\sigma^{n+1}}\right)\bar\psi,\psi_{\sigma^n}\right]\right\rangle\nonumber\\ {}&+ \sum_{n=1}^{\alpha+1} \left\langle \frac{\delta \mathscr F}{\delta\psi_{\sigma^n}}, \left[\frac{\delta \mathscr H}{\delta\bar\xi},\psi_{\sigma^n}\right]\right\rangle\nonumber\\ {}=& \left\langle\frac{\delta \mathscr F}{\delta\bar\xi}, \left[\frac{\delta \mathscr H}{\delta\bar\xi},\xi\right] + \sum_{n=1}^{\alpha+1} \left[\frac{\delta \mathscr H}{\delta\psi_{\sigma^n}},\psi_{\sigma^n}\right]\right\rangle + \sum_{n=1}^{\alpha+1} \left\langle \frac{\delta \mathscr F}{\delta\psi_{\sigma^n}}, \left[\frac{\delta \mathscr H}{\delta\bar\xi},\psi_{\sigma^n}\right]\right\rangle,
  \end{align}
  where we have used, in order: 1) the IL$^{(0,\alpha)}$QG equations \eqref{eq:ILQG}; 2) the functional derivative of $\mathscr H$ with respect to $\bar\xi$, given in \eqref{eq:der}; 3) the definition of $\nu$, given in \eqref{eq:nu}; and 4) the functional derivative of $\mathscr H$ with respect to $\psi_{\sigma^n}$, $n = 1, 2, \dotsc, \alpha+1$, given in \eqref{eq:der}.  Equation \eqref{eq:LP} finally follows by the skew-adjointness of $[a,\cdot\hspace{.01cm}]$ for all $a \in \mathcal F(D)$ (with respect to the $L^2$ pairing), guaranteed by the admissibility condition \eqref{eq:adm-IL}.
\end{proof}

The Jacobian $[\,,\hspace{.01cm}]$ satisfies $[a,b] = -[b,a]$ (antisymmetry) and $[a,[b,c]] + \operatorname{\circlearrowleft} = 0$ (Jacobi identity) for every $a,b,c \in \mathcal F(D)$.  These two properties are inherited by the bracket $\{\,,\hspace{.01cm}\}$ in \eqref{eq:LP} by its linearity in $(\bar\xi, \psi_\sigma, \psi_{\sigma^2}, \dotsc, \psi_{\sigma^{\alpha+1}})$. An explicit proof is given in \cite{Beron-21-POFb}.  The latter makes $\{\,,\hspace{.01cm}\}$ a \emph{Lie--Poisson bracket}.  More concretely, letting $\mu := (\bar\xi, \psi_\sigma, \psi_{\sigma^2}, \dotsc, \psi_{\sigma^{\alpha+1}})$, we can write $\{\,,\hspace{.01cm}\}$ as
\begin{equation}
  \{\mathscr F, \mathscr H\}(\mu) = \left\langle\mu,\left[\frac{\delta \mathscr F}{\delta\mu}, \frac{\delta \mathscr H}{\delta\mu}\right]_\ltimes\right\rangle,
  \label{eq:LP-semidir}
\end{equation}
where
\begin{equation}
  [\mathbf a,\mathbf b]_\ltimes := ([a_1,b_1], [a_1,b_2]-[a_2,b_1], [a_1,b_3]-[a_3,b_1],\dotsc,[a_1,b_{\alpha+2}]-[a_{\alpha+2},b_1])
  \label{eq:semidir}
\end{equation}
with $\mathbf a := (a_1,a_2,\dotsc,a_{\alpha+2})$ and $\mathbf b := (b_1,b_2,\dotsc,b_{\alpha+2})$ for $a_i,b_i \in \mathcal F(D)$. The bracket $[\,,\hspace{.01cm}]_\ltimes$ is antisymmetric and satisfies the Jacobi identity, properties inherited from $[\,,\hspace{.01cm}]$, which in turn are transferred to $\{\,,\hspace{.01cm}\}$ by its linearity in $\mu$.  Operationally, to obtain the IL$^{(0,\alpha)}$QG it is the dual (with respect to the $L^2$ pairing) of the skew-adjoint operator $[\mu,\cdot\,]_\ltimes$ that is needed, that is, $ -[\mu,\cdot\,]_\ltimes =:\mathbb J(\mu)$. This known as the \emph{Poisson operator}, which can be written compactly as
\begin{equation}
  \mathbb J^{nm} = 
  \begin{pmatrix}
	 -[\bar\xi,\cdot\,] & -[\psi_{\sigma^m},\cdot\,]\\ -[\psi_{\sigma^n},\cdot\,] & 0
  \end{pmatrix},
  \label{eq:J}
\end{equation}
$n = 1,2,\dotsc,\alpha+1$. The IL$^{(0,\alpha)}$QG as a generalized Hamiltonian system of Lie--Poisson type then follows as
\begin{equation}
  \renewcommand{\arraystretch}{1.5}
  \begin{pmatrix} 
	 \partial_t\bar\xi\\ \partial_t\psi_{\sigma^{n}} 
  \end{pmatrix} = \sum_{m=1}^n\mathbb J^{nm} 
  \begin{pmatrix} 
	 \frac{\delta \mathscr H}{\delta\bar\xi}\\ \frac{\delta \mathscr H}{\delta\psi_{\sigma^m}} 
  \end{pmatrix}.
\end{equation}

The formula for the Hamiltonian of the IL$^{(0,\alpha)}$QG, given in \eqref{eq:H}, defines a \emph{partial Legendre transform} $(\bar\psi, \psi_\sigma, \psi_{\sigma^2}, \dotsc, \psi_{\sigma^{\alpha+1}}) \mapsto (\bar\xi, \psi_\sigma, \psi_{\sigma^2}, \dotsc, \psi_{\sigma^{\alpha+1}})$. This allows the Lie--Poisson Hamiltonian structure of the IL$^{(0,\alpha)}$QG, deduced in \cite{Beron-21-POFb} by direct manipulation, to be obtained from its Euler--Poincar\'e variational formulation, derived here. Informally, $\smash{\frac{\delta}{\delta\bar\psi}\mathscr L = -\bar\xi}$ may be seen as a momentum conjugate to $\bar\psi$, which justifies viewing \eqref{eq:H} as a (partial) Legendre transform. A rigorous interpretation of \eqref{eq:H} as a Legendre transform necessitates the incorporation of specific geometric mechanics notions. These are elaborated upon in Sec.\@~\ref{sec:geomech}, with a particular emphasis on Sec.\@~\ref{sec:mom}.

\begin{remark}\label{rem:ilqgm}
  The ILQGM \eqref{eq:ILQGM} in the variables
  $(\bar\zeta:=\kappa^{-1}\bar\xi,\psi_\sigma)$ follows from
  \begin{equation}
	 \dot{\mathscr F} = \{\mathscr F,\mathscr H\}_\mathrm{ILQGM} := \int_D\kappa\bar\zeta \left[\frac{\delta\mathscr F}{\delta\bar\zeta}, \frac{\delta\mathscr H}{\delta\bar\zeta}\right] + \kappa\psi_\sigma \left(\left[\frac{\delta\mathscr F}{\delta\bar\zeta},\frac{\delta\mathscr H}{\delta\psi_\sigma}\right] + \left[\frac{\delta\mathscr F}{\delta\psi_\sigma},\frac{\delta\mathscr H}{\delta\bar\zeta}\right]\right)\,d^2x
	 \label{eq:ILQGMbra}
  \end{equation}
  for any $\mathscr F(\bar\zeta,\psi_\sigma)$ with
  \begin{equation}
	 \mathscr H(\bar\zeta,\psi_\sigma) = \frac{1}{2}\int_D \kappa|\nabla\bar\psi|^2 + R^{-2}\bar\psi\,d^2x.
  \end{equation}
  The bracket $\{\,,\hspace{.01cm}\}_\mathrm{ILQGM}$ is antisymmetric. However, it does not satisfy the Jacobi identity unless $\kappa$ is taken to be a constant.  Specifically, using $\mathscr F_{\bar\zeta}$ as a shorthand for $\smash{\frac{\delta}{\delta\bar\zeta}}\mathscr F$, we have that $ \{\mathscr F, \mathscr H\}_{\bar\zeta} = \kappa [\mathscr F_{\bar\zeta}, \mathscr H_{\bar\zeta}]$ plus second-order terms, which can be shown \textup{\cite{Morrison-82}} to neglibly contribute to the Jacobi identity by the skew-adjointness (with respect to the $L^2$ pairing) of the ``Poisson'' operator $-[\kappa\bar\zeta,\cdot\,]$. To see that $\{\,,\hspace{.01cm}\}_\mathrm{ILQGM}$ fails to satisfy the Jacobi identity, the first term in \eqref{eq:ILQGMbra} is enough to be inspected.  Denote it by $\{\mathscr F, \mathscr H\}^{\bar\zeta}$.  Using the fact that the canonical Poisson bracket $[\,,\hspace{.01cm}]$ saisfies the Jacobi identity, one computes $\{\{\mathscr F, \mathscr G\}^{\bar\zeta},\mathscr H\}^{\bar\zeta} + \operatorname{\circlearrowleft} = \int_D \kappa\bar\zeta([\mathscr F_{\bar\zeta},\mathscr G_{\bar\zeta}][\kappa, \mathscr H]  + \operatorname{\circlearrowleft})\, d^2x$, which vanishes if and only if $\kappa$ is a constant.  Thus the ILQGM does not represent a Hamiltonian system.  It might be classified though as a ``quasi'' Hamiltonian system according to the definition of \textup{\cite{Dubos-Tort-14}}.
\end{remark}

\section{Conservation laws}\label{sec:conslaws}

The antisymmetry of the Lie--Poisson bracket \eqref{eq:LP} implies the conservation of energy: $\dot{\mathscr{H}} = \{\mathscr{H}, \mathscr{H}\} = 0$. The conservation of $\mathscr{H}$ can be linked with the invariance of $\mathscr{H}$ itself under time translations as a result of Noether’s theorem.  Specifically, let $\mathscr G(\mu)$ be the generator of an infinitesimal transformation defined by $\delta_{\mathscr G} := -\varepsilon\{\mathscr G,\cdot\}$ where $\varepsilon \downarrow 0$ \cite{Shepherd-90}.  The (infinitesimal) action of $\mathscr G$ on any functional $\mathscr F(\mu)$ is given by
\begin{equation}
    \Delta_{\mathscr G}\mathscr F \sim -\varepsilon \{\mathscr G,\mathscr F\}.
    \label{eq:D_GF}
\end{equation}
By setting $\mathscr F = \mathscr H$, it follows that a symmetry of the Hamiltonian implies a conservation law and vice versa, which is an expression of Noether's theorem. Clearly, $\mathscr H$ is the generator of time shifts $t \to t + \varepsilon$ since $\delta_{\mathscr H} \mu = \varepsilon \partial_t \mu$. Specialized to the IL$^{(0, \alpha)}$QG, conservation of energy is linked to symmetry of the IL$^{(0, \alpha)}$QG's Hamiltonian under time shifts. For this to be fully self-consistent, Assump.\@~\ref{ass:cir-cons} on the preservation of the velocity circulation along the flow domain boundary is key. Indeed, by direct manipulation of the IL$^{(0, \alpha)}$QG system \eqref{eq:ILQG}, that is, upon multiplying the equation for $\bar{\xi}$ by $\bar{\psi}$ and integrating over $D$, one finds that
\begin{equation}
  \dot{\mathscr{H}} = \bar{\psi}\vert_{\partial D}\dot{\gamma},
\end{equation} 
which vanishes provided that $\dot{\gamma} = 0$. 

Referring back to \eqref{eq:D_GF}, the calculation quickly proceeds to
\begin{equation}
    \frac{d}{dt}\Delta_{\mathscr G}\mathscr F - \Delta_{\mathscr G}\frac{d}{dt}\mathscr F \sim\varepsilon \left\{\mathscr F,\frac{d}{dt}\mathscr G\right\}.
    \label{eq:dGFdt}
\end{equation}
From this, it follows that if the generator of the transformation is conserved, it produces a symmetry in the most general sense: allowing time to pass and executing a transformation are operations that commute \cite{Ripa-RMF-92a}.

The reciprocal of the above quite general Noether's theorem is not true: the noted general symmetry implies that the generator of the symmetry is an arbitrary function of \emph{Casimirs} $\mathscr{C}(\mu)$, satisfying \cite[cf., e.g.,][]{Marsden-Ratiu-99}
\begin{equation}
    \{\mathscr C,\mathscr F\} = 0\quad \forall \mathscr F(\mu).
\end{equation}
Because the Casimirs Poisson-commute with any functional, they are conserved. An important observation is that the Casimirs do \emph{not} produce any transformation. Their conservation is still connected to symmetries, but these are not visible in the Eulerian description of fluid flow. We will return to this in the following section.

The Casimirs for $\alpha > 0$, derived in \cite{Beron-21-POFb}, are given by
\begin{equation}
  \mathscr C^\alpha_{a,F} := \int_D a\bar\xi + F(\psi_\sigma, \psi_{\sigma^2}, \dotsc, \psi_{\sigma^{\alpha+1}})\,d^2x
  \label{eq:CF}
\end{equation}
for any constant $a$ and function $F$. The Casimir for $\alpha = 0$, i.e., the IL$^0$QG, is
\begin{equation}
  \mathscr C^0_{F,G} := \int_D \bar\xi F(\psi_\sigma) + G(\psi_\sigma)\,d^2x,
  \label{eq:CFG}
\end{equation}
where $F,G$ are arbitrary function. These Casimirs have been known to exist for some time since the IL$^0$QG, incompressible Euler–Boussinesq flow on a vertical plane \cite{Benjamin-86}, and the so-called low-$\beta$ reduced magnetohydrodynamics \cite{Morrison-Hazeltine-84}, all share the same bracket.  The ILQGM discussed in \cite{Ripa-RMF-96} also supports this conservation law which commutes with any function in a bracket which however does no satisfy the Jacobi identity; cf.\ Rem.\@~\ref{rem:ilqgm}.  Finally, for completeness we write down Casimir for the HLQG:
\begin{equation}
  \mathscr C_F : = \int_D F(\bar\xi)\,d^2x,
  \label{eq:CHL}
\end{equation}
where $F$ is any function. The Casimir has a well-documented historical linage \cite{Morrison-81, Weinstein-83}.

\begin{remark}
  Note that, in a broad sense, $\mathscr C^\alpha_{1,0} = \mathscr C^0_{1,0} = \mathcal K(D)$, which represents the Kelvin--Noether circulation along the boundary of the flow domain \eqref{eq:KD}. Likewise, $\mathscr C^\alpha_{0,\nu} = C^0_{0,\nu} = \int_D \bar{\psi} \, d^2x - \mathcal V$, which is related to volume conservation; recall Rem.\@~\ref{rem:vol}. The apparent looseness arises from the fact that the IL$^{(0,\alpha)}$QG model considers fewer advected buoyancies as $\alpha$ approaches 0, thus these equalities should not be interpreted as strict equalities.
\end{remark}

\section{Particle relabeling symmetry and Casimir conservarion}\label{sec:Noether}

\begin{lemma}
  The IL$^{(0,\alpha}$QG preserves, in addition to the Casirmirs, the following quantities:
  \begin{equation}
    \mathscr I_F(\bar\xi,\nu) := \int_D\bar\xi F(\nu)\,d^2x,
    \label{eq:I}
  \end{equation}
where $\nu$ is the linear combination of the buoyancy coefficients defined in \eqref{eq:nu}.   
\end{lemma}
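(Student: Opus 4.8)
The plan is to prove conservation of $\mathscr I_F$ directly on solutions of \eqref{eq:ILQG}, rather than by exhibiting it as a Casimir: for $\alpha\ge 1$ it is \emph{not} one, since $\delta\mathscr I_F/\delta\bar\xi = F(\nu)$ produces $[\psi_{\sigma^n},F(\nu)] = F'(\nu)[\psi_{\sigma^n},\nu]\not\equiv 0$ (only when $\alpha=0$, where $\nu=\psi_\sigma$, does $\mathscr I_F$ collapse onto the genuine Casimir $\mathscr C^0_{F,0}$ of \eqref{eq:CFG}). Hence $\{\mathscr C,\cdot\}\equiv 0$ is unavailable, and the vanishing of $\dot{\mathscr I}_F = \{\mathscr I_F,\mathscr H\}$ must instead be extracted using the equations of motion.

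First I would note that $\nu$, being the linear combination \eqref{eq:nu} of the advected fields $\psi_\sigma,\dots,\psi_{\sigma^{\alpha+1}}$ with the \emph{constant} coefficients $\overline{\sigma^{n+1}}$, is itself materially conserved, $\partial_t\nu+[\bar\psi,\nu]=0$, by linearity of the right-hand equations in \eqref{eq:ILQG}. Then, with $D/Dt=\partial_t+[\bar\psi,\cdot\,]$ and using the $\bar\xi$-equation of \eqref{eq:ILQG} together with $DF(\nu)/Dt = F'(\nu)\,D\nu/Dt = 0$,
\[
  \frac{D}{Dt}\big(\bar\xi F(\nu)\big)=F(\nu)\,\frac{D\bar\xi}{Dt}=R^{-2}_\alpha\,F(\nu)\,[\bar\psi,\nu].
\]
The crucial step is to recognise the right-hand side as an exact Jacobian: taking $\Phi$ with $\Phi'=F$, one has $F(\nu)[\bar\psi,\nu]=F(\nu)\,\nabla^\perp\bar\psi\cdot\nabla\nu=\nabla^\perp\bar\psi\cdot\nabla\Phi(\nu)=[\bar\psi,\Phi(\nu)]$, whence
\[
  \partial_t\big(\bar\xi F(\nu)\big)=\big[\bar\psi,\,R^{-2}_\alpha\Phi(\nu)-\bar\xi F(\nu)\big].
\]

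It then remains to integrate over $D$ and discard the boundary contribution. Using $\nabla\cdot\nabla^\perp\bar\psi=0$ (Euler's incompressibility, already invoked in Sec.\@~\ref{sec:kelvin}), for any $g\in\mathcal F(D)$ one has $[\bar\psi,g]=\nabla\cdot(g\,\nabla^\perp\bar\psi)$, so $\int_D[\bar\psi,g]\,d^2x=\oint_{\partial D}g\,\nabla^\perp\bar\psi\cdot\hat{\mathbf n}\,ds=0$ by the no-flow condition in \eqref{eq:bc} — this is \eqref{eq:dual} specialised to $a\equiv 1$. Applying it with $g=R^{-2}_\alpha\Phi(\nu)-\bar\xi F(\nu)$ gives $\dot{\mathscr I}_F=\frac{d}{dt}\int_D\bar\xi F(\nu)\,d^2x=0$, which is the claim. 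I do not expect a genuine obstacle: the computation is elementary, and the only points requiring care are conceptual — that $\mathscr I_F$ is a strictly weaker invariant than a Casimir, so the argument must be carried out on the solution manifold — and the bookkeeping fact that the $\overline{\sigma^{n+1}}$ are constants, without which $\nu$ would not be materially advected and the cancellation in the first display would break down. The same computation, read through $\delta\mathscr H/\delta\bar\xi=-\bar\psi$ from \eqref{eq:der}, equivalently confirms $\{\mathscr I_F,\mathscr H\}=0$, explaining why $\mathscr I_F$ is conserved despite not Poisson-commuting with every functional.
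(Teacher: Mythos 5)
Your proposal is correct, and it follows the same route the paper intends: the paper's own ``proof'' is a one-line deferral to the direct-manipulation argument of the cited IL$^{(0,1)}$QG reference, and your computation is exactly that argument carried out in full for general $\alpha$ --- material conservation of $\nu$ (constant coefficients $\overline{\sigma^{n+1}}$), the rewriting $F(\nu)[\bar\psi,\nu]=[\bar\psi,\Phi(\nu)]$ with $\Phi'=F$, and the vanishing of $\int_D[\bar\psi,g]\,d^2x$ via \eqref{eq:dual} with the no-flow condition in \eqref{eq:bc}. Your side remarks that $\mathscr I_F$ is not a Casimir of \eqref{eq:LP} for $\alpha>0$ and that it reduces to $\mathscr C^0_{F,0}$ when $\alpha=0$ are also consistent with the discussion immediately following the lemma in the text.
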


\begin{proof}
  The proof is a trivial extension of that given by \cite{Beron-Olascoaga-24} for the particular case of the IL$^{(0,1)}$QG. 
\end{proof}

Building upon the analysis presented in \cite{Beron-Olascoaga-24}, with the exception of the vertically mixed scenario ($\alpha = 0$), the conservation laws \eqref{eq:I} are not Casimirs of the Lie--Poisson bracket \eqref{eq:LP}.  However, they form the kernel of the following bracket:
\begin{equation}
	 \{\mathscr F,\mathscr G\}_\nu := \left\langle\bar\xi,\left[\frac{\delta \mathscr F}{\delta\bar\xi},\frac{\delta \mathscr G}{\delta\bar\xi}\right]\right\rangle + \left\langle\nu, \left[\frac{\delta \mathscr F}{\delta\bar\xi},\frac{\delta \mathscr G}{\delta\nu}\right] + \left[\frac{\delta \mathscr F}{\delta\nu},\frac{\delta \mathscr G}{\delta\bar\xi}\right] \right\rangle.
	 \label{eq:LP-nu}
\end{equation}
(To be more precise, the Casimirs of the above bracket are given by \eqref{eq:I} plus $\int_D G(\nu)\,d^2x$ where $G$ is an arbitrary function.) Upon evaluating this bracket with the Hamiltonian given by equation \eqref{eq:H}, or more explicitly in \eqref{eq:H-exp}, understood as a functional of $(\bar\xi, \nu)$, the following set of equations of motion emerges:
\begin{equation}
  \partial_t \bar\xi + [\bar\psi,\bar\xi] =  R^{-2}_\alpha[\bar\psi,\nu],\quad \partial_t \nu + [\bar\psi,\nu] = 0.
  \label{eq:ILQG-nu}
\end{equation}
This system governs the evolution of the potential vorticity $\bar\xi$, which remains unaffected by the details of the dynamics of the individual buoyancy coefficients $\psi_{\sigma}^n$, where $n = 1, 2, \dotsc, \alpha+1$. Each of these coefficients is independently conserved in a material manner.  Observe that the ``bulk'' dynamics described by \eqref{eq:ILQG-nu} are identical (modulo a difference in the Rossby deformation scale) to those of the IL$^0$QG.

The conservation laws \eqref{eq:I} represent Casimirs but in a weaker sense than \eqref{eq:CF} and \eqref{eq:CFG} as they pertain to the bulk IL$^{(0,\alpha}$QG potential vorticity dynamics. In any case, these integrals of motion are all related, via Noether theorem, to particle relabeling symmetry, as we proceed to demonstrate next.

First, we note that Euler--Poincar\'e system dual to \eqref{eq:ILQG-nu} follows from the Euler--Poincar\'e variational principle \eqref{eq:dS} with the Lagrangian understood as a functional of $(\bar\psi, \nu)$ and with constraints \eqref{eq:cons} replaced by
\begin{equation}
    \delta\bar\psi = \partial\eta + [\bar\psi,\eta],\quad \delta\nu = -[\eta,\nu].
    \label{eq:cons-nu}
\end{equation}

Next, observe that the streamfunction, $\bar{\psi}$, and the buoyancy coefficients, $\psi_{\sigma}^{n}$, $n = 1, 2, \dotsc, \alpha + 1$, \emph{remain invariant under a relabeling of fluid particle labels}. This invariance can be rigorously articulated using the language of differential geometry as adopted in Sec.\@~6. The relabeling of the particles leaves the IL$^{(0,\alpha)}$QG Lagrangian \eqref{eq:L} unchanged:
\begin{equation}
    \delta\mathscr L = \left\langle\frac{\delta\mathscr L}{\delta\bar\psi},\delta\bar\psi\right\rangle + \sum_{n=1}^{\alpha+1}\left\langle\frac{\delta\mathscr L}{\delta\psi_{\sigma}^n},\delta\psi_{\sigma}^n\right\rangle = 0
    \label{eq:dL}
\end{equation}
since
\begin{equation}
    \delta\bar\psi = 0,\quad \delta\psi_{\sigma}^n = 0,
    \label{eq:d0}
\end{equation}
$n = 1,2,\dotsc,\alpha+1$, under the relabeling.  Comparing \eqref{eq:d0} with \eqref{eq:cons} it follows that
\begin{equation}
    \partial_t \eta + [\bar\psi,\eta] = 0,\quad
	 [\eta,\delta\psi_{\sigma}^n] = 0, \label{eq:eta-evol}
\end{equation}
$n = 1,2,\dotsc,\alpha+1$, where $\eta$ is interpreted as the \emph{generator of the (Lie) symmetry of the Lagrangian} \eqref{eq:dL}. When $\alpha = 0$, the generator is given by $\eta = F(\psi_\sigma)$, where $F$ is an arbitrary function. This is because $[F(\psi_\sigma), \psi_\sigma] = F' \nabla^\perp \psi_\sigma \cdot \nabla \psi_\sigma = 0$ and $\partial_t F(\psi_\sigma) + [\bar\psi, F(\psi_\sigma)] \equiv \frac{D}{Dt} F(\psi_\sigma) = F' \frac{D}{Dt} \psi_\sigma = 0$. In the stratified ($\alpha > 0$) case, however, the generator must be a constant since only $\eta = \mathrm{const}$ can simultaneously satisfy \eqref{eq:eta-evol}.

The particle relabeling map equally preserves the Lagrangian for the Euler--Poincaré dynamics involving the variables $(\bar\psi, \nu)$. In a manner akin to the $\alpha = 0$ case, the generator $\eta$ of the corresponding symmetry is an arbitrary function of $\nu$, since such an $\eta$ satisfies
\begin{equation}
    \partial_\eta + [\bar\psi,\eta] = 0,\quad [\eta,\nu] = 0,
    \label{eq:eta-nu}
\end{equation}
exactly.  It is evident that $\eta = F(\nu)$, where $F$ is an arbitrary function, can only satisfy \eqref{eq:eta-evol} in a weak sense.

Now, computing the variation of the Euler--Poincar\'e action for the L$^{(0,\alpha)}$QG as in the proof of Thm.\@~\ref{thm:EPforILQG} but lifting up the restriction that the variations $\delta\bar\psi$ and $\delta\psi_{\sigma}^n$, $n = 1,2,\dotsc,\alpha+1$, vanish at the endpoints of the integration, a boundary term emerges, given by $J^\eta\vert_{t_0}^{t_1}$ where
\begin{equation} 
    J^\eta := \left\langle\frac{\delta\mathscr L}{\delta\bar\psi},\eta\right\rangle.
\end{equation}
It follows that to fulfill the Euler--Poincar\'e variational principle, \emph{the quantity $J^\eta$ must remain constant along the dynamics produced by the Euler--Poincar\'e system for the IL$^{(0,\alpha)}$QG}, given by \eqref{eq:EPforILQG}. Conservation of $J^\eta$ is related to symmetry since 
\begin{align}
    \dot J^\eta &= \left\langle\partial_t\frac{\delta\mathscr L}{\delta\bar\psi},\eta\right\rangle + \left\langle\frac{\delta\mathscr L}{\delta\bar\psi},\partial_t\eta\right\rangle\nonumber\\ &= \left\langle - \left[\bar\psi,\frac{\delta \mathscr L}{\delta\bar\psi}\right] - \sum_{n=1}^{\alpha+1} \left[\frac{\delta\mathscr L}{\delta\psi_{\sigma^n}},\psi_{\sigma^n}\right],\eta\right\rangle + \left\langle\frac{\delta\mathscr L}{\delta\bar\psi},\delta\bar\psi - [\bar\psi,\eta]\right\rangle\nonumber\\ &= \left\langle\frac{\delta\mathscr L}{\delta\bar\psi},\delta\bar\psi\right\rangle + \sum_{n=1}^{\alpha+1}\left\langle\frac{\delta\mathscr L}{\delta\psi_{\sigma}^n},\delta\psi_{\sigma}^n\right\rangle\nonumber\\ &= \delta\mathscr L,
\end{align}
where Euler--Poincar\'e equation \eqref{eq:EP} and the constraints \eqref{eq:cons} were employed along with \eqref{eq:dual}.  The above expression vanishes under particle relabeling \eqref{eq:d0}. Consequently, the quantity $J^\eta$ can be appropriately referred to as a \emph{Noether quantity}. For $\alpha = 0$, with the symmetry generator given by $\eta = -F(\psi_\sigma)$, where $F$ is arbitrary just as is the sign, and noting that $\frac{\delta}{\delta\psi}\mathscr L = -\bar\psi$, cf.\@~\eqref{eq:dLdpsi}, it follows that
\begin{equation}
    J^\eta = \int_D\bar\xi F(\psi_\sigma)\,d^2x.
\end{equation}
The critical observation is that the above $J^\eta$ gives the first term of the IL$^0$QG Casimir \eqref{eq:CF}. Noting that the second term follows immediately by material conservation of $\psi_\sigma$, a connection between this Casimir and particle relabeling symmetry via the Noether theorem is established. Similar connections follow for the remaining Casimirs. When $\alpha > 0$ the symmetry generator can be taken to be, with no loss of generality, $\eta = -a$, where $a$ is an arbitrary constant. Then,
\begin{equation}
    J^\eta = a\int_D\bar\xi\,d^2x,
\end{equation}
which gives the first term of the IL$^{(0,\alpha)}$QG, $\alpha>0$, Casimir \eqref{eq:CF}.  

Finally, the reasoning in the preceding paragraph can be applied to the variables $(\bar{\psi}, \nu)$. Recalling that the generator of the particle relabeling symmetry in such variables is $\eta = - F(\nu)$, where sign and function $F$ are arbitrary, it follows that the Noether quantity is given by
\begin{equation}
    J^\eta = \int_D\bar\xi F(\nu)\,d^2x,
\end{equation}
which coincides with the conservation law \eqref{eq:I}, thereby relating it to particle relabeling symmetry through Noether's theorem.

In conclusion and to ensure completeness, we examine the HLQG given that the explicit associations between Casimir--Noether quantities have not yet been established, to the best of our knowledge. The generator $\eta$ of particle relabeling symmetry satisfies in this case satisfies the following evolution equation:
\begin{equation}
    \partial_t\eta + [\bar\psi,\eta] = 0.
\end{equation}
This equation is readily satisfied by any function of $\bar\xi$. By selecting $\eta = -\bar\xi^{-1}F(\bar\xi)$ for an arbitrary function $F$, we deduce that
\begin{equation}
    J^\eta = \int_D F(\bar\xi)\,d^2x,
\end{equation}
which precisely corresponds to the HLQG Casimir \eqref{eq:CHL}.

\section{Geometric mechanics interpretation}\label{sec:geomech}

The language of diffeomorphisms in differential geometry is the appropriate one to rigorously communicate the results discussed so far. In this section, we adopt this language, which further enables us to shed additional light on several aspects.

\subsection{Geometric view of IL$^{(0,\alpha)}$QG
flow}\label{sec:geoflow}

First, recall that the fluid domain $D\subset\mathbb{R}^2$. Then, geometrically \cite[e.g.,][]{Marsden-Ratiu-99,Holm-etal-09}, the Euler--Poincar\'e variational principle is defined on $\mathfrak{sdiff}(D) \times \mathcal F(D)^{\alpha+1}$, namely, the Cartesian product of the \emph{Lie algebra of the group of symplectic (i.e., area-preserving) diffeomorphisms} on $D$, denoted $\mathrm{SDiff}(D)$, with $\alpha+1$ Cartesian copies of the space of smooth time-dependent scalar fields on $D$.  This involves to first view trajectories of fluid particles on $D$ as curves on $\mathrm{SDiff}(D)$, which is achieved by lifting of the motion on $D$ to $\mathrm{SDiff}(D)$, whose \emph{left-action} on $D$ produces the fluid trajectories (Fig.\@~\ref{fig:geo}).

\begin{figure}[t!]
  \centering%
  \includegraphics[width=.5\textwidth]{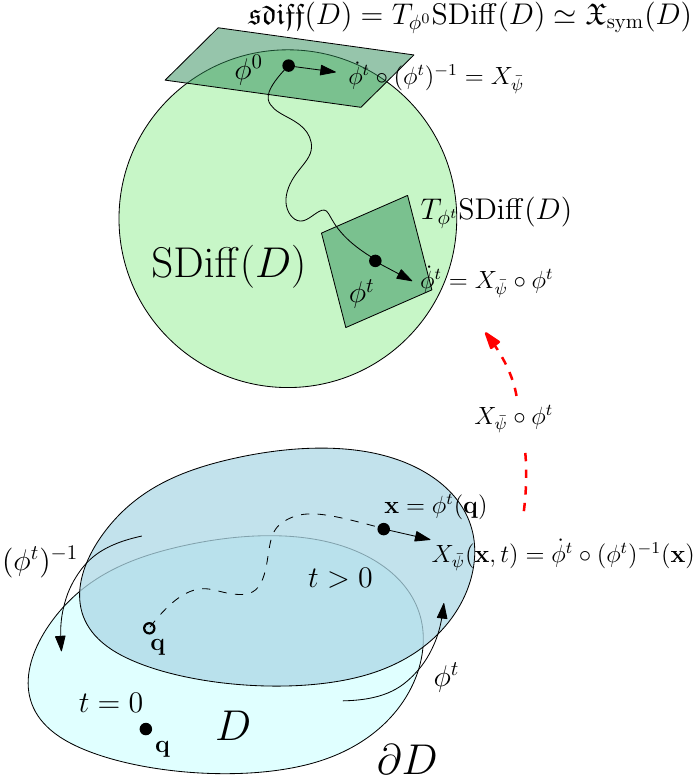}%
  \caption{Geometric mechanics view of fluid motion produced by the IL$^{(0,\alpha)}$QG. See text for details.}
  \label{fig:geo}%
\end{figure}

Specifically, the set in $\mathrm{SDiff}(D)$ is given by $\{\phi^t\}$, where $\phi^t : D\to D$; $\mathbf q \mapsto \mathbf x$ is a smoothly invertible area preserving map of positions of fluid particles at time $t=0$, representing the \emph{reference configuration}, to positions at time $t > 0$, representing the \emph{current configuration}.  This expresses the left-action of $\mathrm{SDiff}(D)$ on $D$.  Explicitly, $\mathbf x = \phi^t(\mathbf q)$, which gives the trajectory of a fluid trajectory starting at $\mathbf q$, taken as a \emph{label}. The diffeomorphic nature of $\phi^t$ conveys to $\mathrm{SDiff}(D)$ a differentiable (manifold) structure and hence a Lie character.  Multiplication in $\mathrm{SDiff}(D)$ is given by the composition of functions. Thus, $\mathrm{SDiff}(D)$ is given by the pair $(\{\phi^t\},\circ)$. The vector space corresponding to $\mathfrak{sdiff}(D)$ is identified with the tangent space to $\mathrm{SDiff}(D)$ at the identity, arranged to happen at $t=0$, namely, $\phi^0$. This point is special inasmuch as any point on $\mathrm{SDiff}(D)$ can be accessed via $\phi^t$. The bracket in $\mathfrak{sdiff}(D)$ is given by $[\,,\hspace{.01cm}]$, the \emph{canonical Poisson bracket} in $\mathbb R^2$, as follows.  The fluid velocity
\begin{equation}
  \nabla^\perp\bar\psi\cdot\nabla = \partial_x\bar\psi\partial_y -\partial_y\bar\psi\partial_x
\end{equation}
is obtained by \emph{right-translation} of $\dot\phi^t \in T_{\phi^t}\mathrm{SDiff}(D)$ with the inverse of $\phi^t  \in \operatorname{{SDiff}}(D)$, $(\phi^t)^{-1} \in \mathrm{{SDiff}}(D)$, to $T_{\phi^0}\mathrm{SDiff}(D)$:
\begin{equation}
  \nabla^\perp\bar\psi(\mathbf x,t)\cdot\nabla = \partial_t\vert_{\mathbf q}\phi^t(\mathbf q) =: \dot\phi^t(\mathbf q) = \dot\phi^t\circ (\phi^t)^{-1}(\mathbf x) \in T_{\phi^0}\mathrm{SDiff}(D).
\end{equation}
Therefore, $\mathfrak{sdiff}(D)$ can be conceptualized as the coset $T\mathrm{SDiff}(D)\backslash\mathrm{SDiff}(D)$, or equivalently $TD\backslash\mathrm{SDiff}(D)$. In other words, it can be seen as the collection of equivalence classes, where two elements of $T\mathrm{SDiff}(D)$ are considered equivalent if they differ by right multiplication by an element of $\mathrm{SDiff}(D)$.

\begin{remark}\label{rem:sob}
    A cautionary note is that, for the above geometric interpretation of fluid motion to be rigorously valid in the sense of \textup{\cite{Ebin-Marsden-70}}, $\mathrm{SDiff}(D)$ must be chosen from a specific Sobolev class, $s$, meaning that $\phi^t$ and its weak derivatives up to order $s$ should belong to $L^2(D)$. This condition would constitute $\mathrm{SDiff}(D)$ as a Hilbert manifold, equipped with inverse and implicit function theorems, as well as a general solution theorem for differential equations. When $s > 2$, only the right action of $\mathrm{SDiff}(D)$ on $T\mathrm{SDiff}(D)$ would be smooth; nonetheless, this is the operation of relevance for fluid dynamics.
\end{remark}
  
Now, since $\nabla\cdot\nabla^\perp\bar\psi = 0$, it follows that $\nabla^\perp\bar\psi\cdot\nabla$ represents a \emph{nonautonomous canonical Hamiltonian vector field}.  The corresponding Hamiltonian is the streamfunction, $\bar\psi$.  Denote, as usual,
\begin{equation}
  X_{\bar\psi} := \nabla^\perp\bar\psi\cdot\nabla
\end{equation}
and by $\mathfrak X_\mathrm{sym}(D)$ the space of (nonautonomous, canonical) Hamiltonian vector fields on $D$, viz., $X_{\bar\psi}\in \mathfrak X_\mathrm{sym}(D)$. The commutator of vectors is the operation that expresses the natural way in which elements of $T_{\phi^0}\mathrm{SDiff}(D)$, operationally identified with $\mathfrak X_\mathrm{sym}(D)$, act on themselves.  Such an operation is obtained by linearizing at the identity the \emph{left-conjugation}, namely,
\begin{equation}
  \left.\frac{d}{dt}\right\vert_{t=0} \left.\frac{d}{ds}\right\vert_{s=0} \phi^t_1\circ \phi^s_2\circ (\phi^t_1)^{-1} = X_{\bar\psi_1}X_{\bar\psi_2} - X_{\bar\psi_2}X_{\bar\psi_1} = \nabla^\perp[\bar\psi_1,\bar\psi_2] \cdot \nabla,
\end{equation}
where the last equality follows from cancellation of cross-derivatives. The last equality allows one to identify $[\,,\hspace{.01cm}]$ with the bracket of $\mathfrak{sdiff}(D)$.  One can then think of $\bar\psi$ as an element of $\mathfrak{sdiff}(D)$, understood as the pair $(\mathcal F(D), [\,,\hspace{.01cm}])$, by identifying $\bar\psi \in \mathcal F(D)$ with $X_{\bar\psi} \in T_{\phi^0}\mathrm{SDiff}(D) \simeq \mathfrak X_\mathrm{sym}(D)$.

This way, since
\begin{equation}
  \bar\psi\in \mathfrak{sdiff}(D),
\end{equation}
the Lagrangian
\begin{equation}
  \mathscr L \in C^\infty\big(\mathfrak{sdiff}(D)\times \mathcal F(D)^{\alpha+1}\big) : \mathfrak{sdiff}(D) \times \mathcal F(D)^{\alpha+1} \to \mathbb R. 
\end{equation}
Elements of $\mathfrak{sdiff}(D)^*$, the dual of $\mathfrak{sdiff}(D)$, are identified using the $L^2$ pairing $\langle\,,\hspace{.01cm}\rangle : \mathfrak{sdiff}(D)^* \times \mathfrak{sdiff}(D) \to \mathbb R$.  With this identification in mind,
\begin{equation}
  \smash{\frac{\delta\mathscr L}{\delta\bar\psi}} \in \mathfrak{sdiff}(D)^*.
\end{equation}

Geometric interpretations of the constraints \eqref{eq:cons} to which Hamilton's principle in Thm.\ \ref{thm:EPforILQG} is subject to are in order.  Fix $t$ and extend $\phi^t$, the flow generated by $X_{\bar\psi}$, to a curve $\epsilon \mapsto \phi^t(\epsilon)$. Let $X_\eta = \nabla^\perp\eta\cdot\nabla$ be defined by
\begin{equation}
  X_\eta(\mathbf x,t) := \left.\frac{d}{d\epsilon}\right|_{\epsilon=0}\phi^t(\epsilon)\circ(\phi^t)^{-1}(\mathbf x) =: \delta\phi^t\circ(\phi^t)^{-1}(\mathbf x) \in T_{\phi^0}\operatorname{SDiff}(D).
\end{equation}
We begin with the constraint on $\bar\psi$. By subtracting
\begin{equation}
  \frac{d}{dt}(\delta\phi^t)\circ(\phi^t)^{-1} = \frac{d}{dt}(X_\eta\circ\phi^t)\circ(\phi^t)^{-1} = \partial_t X_\eta + (\nabla^\perp\bar\psi\cdot\nabla) X_\eta
\end{equation}
from
\begin{equation}
  \delta\dot\phi^t\circ(\phi^t)^{-1} = \delta(X_{\bar\psi}\circ\phi^t)\circ(\phi^t)^{-1} = \delta X_{\bar\psi} + (\nabla^\perp\eta\cdot\nabla) X_{\bar\psi}
\end{equation}
with commutativity of differentiation in mind, one finds
\begin{equation}
  \delta X_{\bar\psi} = \partial_tX_\eta + X_{\bar\psi}X_\eta - X_\eta X_{\bar\psi},
\end{equation}
which reduces to the left equation in \eqref{eq:cons}.

Consider next the constraints on $\psi_{\sigma^n}$, $n = 1, 2, \dotsc, \alpha + 1$. Let $\psi_{\sigma^n}^t(\mathbf x) := \psi_{\sigma^n}(\mathbf x,t)$. Material conservation of $\psi_{\sigma^n}^t(\mathbf x)$ expresses as
\begin{equation}
  \psi_{\sigma^n}^t(\mathbf x) =  \psi_{\sigma^n}^0(\mathbf q) = (\phi_t)_*\psi_{\sigma^n}^0(\mathbf x) = ((\phi_t)^{-1})^*\psi_{\sigma^n}^0(\mathbf x),
\end{equation}
where $(\phi^t)_*$ denotes \emph{pushforward} by $\phi^t$.  Then one has
\begin{align}
  \delta \psi_{\sigma^n}^t := \left.\frac{d}{d\epsilon} \psi_{\sigma^n}^{t,\epsilon}\right\vert_{\epsilon=0} = \left.\frac{d}{d\epsilon} ((\phi_{t,\epsilon})^{-1})^*\psi_{\sigma^n}^0\right\vert_{\epsilon=0} &= \left.\left.\frac{d}{ds}\right\vert_{s=0} ((\phi_{t,\epsilon+s})^{-1})^*\psi_{\sigma^n}^0\right\vert_{\epsilon=0}\nonumber\\ &=  ((\phi_t)^{-1})^* \left.\frac{d}{d\epsilon}\right\vert_{\epsilon=0} ((\phi_{t,\epsilon})^{-1})^*\psi_{\sigma^n}^0\nonumber\\ &= - ((\phi_t)^{-1})^*\pounds_{X_\eta}\psi_{\sigma^n}^0\nonumber\\ &= - \pounds_{X_\eta}\psi_{\sigma^n}^t,
\end{align}
where the \emph{Lie derivative's dynamic definition} was used in the penultimate step. The right equations in \eqref{eq:cons} are obtained upon noting that, since $\psi_{\sigma^n}^t$ is a scalar, $\pounds_{X_\eta}\psi_{\sigma^n}^t$ simply is the derivative of $\psi_{\sigma^n}^t$ in the direction of $\nabla^\perp\bar\psi$.

A final concept is needed to complete the geometric mechanics interpretation of Thm.\@~\ref{thm:EPforILQG}.  A quantity is said to be \emph{advected} when it is dragged by Lie transport in the direction of the fluid velocity, say $u(\mathbf x,t)$, generalizing the notion of material conservation.  Let $a(\mathbf x,t) = a^t(\mathbf x)$ be an advected quantity.  This satisfies the following \emph{pullback} relationship:
\begin{equation}
	a^0(\mathbf q) = a^t(\mathbf x) = a^t\circ \phi^t(\mathbf q) = (\phi^t)^*a^t(\mathbf q).
\end{equation}
Taking the time derivative,
\begin{equation}
  0 = \frac{d}{dt}a^t(\mathbf x) = (\phi^t)^*(\partial_t + \pounds_v)a^t(\mathbf q) = (\partial_t + \pounds_v)a^t(\mathbf x),
\end{equation}
where the (second) \emph{Lie derivative theorem} was used. For $a\in \mathcal F(D)$ and $u = X_{\bar\psi}$, the above reads
\begin{equation}
  (\partial_t + \pounds_{X_{\bar\psi}})a = \partial_t a + \nabla^\perp\bar\psi\cdot\nabla a = \partial_t a + [\bar\psi,a] = 0,  
  \label{eq:adv}
\end{equation}
which is the equation satisfied by $\psi_{\sigma^n}$, $n = 1, 2,
\dotsc, \alpha + 1$.

With all the considerations above in mind, the set given by \eqref{eq:EPforILQG} and the equations for material conservation of $\psi_{\sigma^n}$, $n = 1,2,\dotsc,\alpha+1$, represent an \emph{Euler--Poincar\'e variational equation on symplectic diffeomorphisms with advected quantities}.

\subsection{Semidirect product Lie algebra}\label{rem:semidirect} 
 
The bracket $[\,,\hspace{.01cm}]_\ltimes$ in \eqref{eq:semidir} is a bracket for the algebra of the Lie group obtained by extending $\mathrm{SDiff}(D)$ by \emph{semidirect product} with $\mathcal F(D)^{\alpha+1}$, upon identifying the dual of $\mathcal F(D)$ with $\mathcal F(D)$ itself.  This follows by noting that the induced representation of $\mathfrak{sdiff}(D)$ on $\mathcal F(D)^{\alpha+1}$ is given by Lie differentiation with respect to canonical Hamiltonian vectors in $\mathbb R^2$, or, in term of functions, by canonical Poisson brackets in $\mathbb R^2$ \cite{Marsden-Morrison-84}.  The bracket of this semidirect product Lie algebra, denoted $\mathfrak{sdiff}(D) \ltimes \mathcal F(D)^{\alpha+1}$, carries the Lie--Poisson bracket $\{\,,\hspace{.01cm}\}$, given in \eqref{eq:LP} or more explicilty as written in \eqref{eq:LP-semidir}, on its dual, $\mathfrak{sdiff}(D)^* \ltimes \mathcal F(D)^{\alpha+1}$. The bracket $[\,,\hspace{.01cm}]_\ltimes$ is antisymmetric and satisfies the Jacobi identity, properties that are inherited by $\{\,,\hspace{.01cm}\}$, which in addition satisfies the Leibniz rule. Being $\{\,,\hspace{.01cm}\}$ a derivation in each of its arguments, it conveys to $C^\infty(\mathfrak{sdiff}(D)^* \ltimes \mathcal F(D)^{\alpha+1})$ a \emph{Lie enveloping algebra} structure, where $\mathfrak{sdiff}(D)^* \ltimes \mathcal F(D)^{\alpha+1}$ represents the underlying \emph{Poisson manifold}.

\subsection{Momentum map and Legendre transformation}\label{sec:mom}

The step that remains is to find the map that takes the cotangent bundle $T^*D$ to the dual of $\mathfrak{sdiff}(D)$ and the resulting Legendre transformation.  To accomplish this goal, we let $a := (\psi_\sigma, \psi_{\sigma^2}, \dotsc, \psi_{\sigma^{\alpha+1}})$ and construct a \emph{Clebsch action} for the dynamics produced by the Lagrangian $\mathscr L(\bar\psi,a)$ by constraining them to enforce the action of $\mathrm{SDiff}(D)$ on the fluid particle (i.e., Lagrangian) labels $\mathbf q \in TD$ and advect $a \in \mathcal F(D)^{\alpha+1}$, to wit,
\begin{equation}
  \mathfrak S = \int_{t_0}^{t_1} \mathscr L(\bar\psi,a)    + \big\langle \mathbf p,\partial_t \mathbf q + [\bar\psi,\mathbf q]\big\rangle_{TD} + \big\langle b, \partial_t a + [\bar\psi,a]\big\rangle_{\mathcal F(D)^{\alpha+1}}\,dt,
  \label{eq:action}
\end{equation}
where $\mathbf p$ and $b$ are Lagrange multipliers and we have labeled the angle brackets to make explicit the spaces they pair. Computing $\delta \mathfrak S = 0$ it follows that
\begin{equation}
  J^\eta := \big\langle\mathbf p,\delta \mathbf q\big\rangle_{TD} + \big\langle b,\delta a\big\rangle_{\mathcal F(D)^{\alpha+1}}
  \label{eq:J1}
\end{equation}
is constant along the dynamics produced by
\begin{equation}
  \begin{gathered}
	 \frac{\delta\mathscr L}{\delta\bar\psi} = [\mathbf p,\mathbf q] + [b,a],\quad \frac{\delta\mathscr L}{\delta a} = \partial_t b + [\bar\psi, b],\\ \partial_t\mathbf p + [\bar\psi,\mathbf p] = 0,\quad \partial_t\mathbf q + [\bar\psi,\mathbf q] = 0,\quad \partial_ta + [\bar\psi,a] = 0.
  \end{gathered}
  \label{eq:clebsch}
\end{equation}
A lengthy calculation shows that these dynamics coincide with those produced by the Euler--Poincar\'e equations \eqref{eq:EPforILQG}. The calculation involves evaluating $\langle\partial_t\smash{\frac{\delta}{\delta\bar\psi}}\mathscr L, \varphi\rangle_{\mathfrak{sdiff}(D)}$ for any $\varphi \in \mathfrak{sdiff}(D)$ following steps similar to those taken in Sec.\@~3.2 of \cite{Cotter-Holm-07}.

Consider now the \emph{particle relabeling map} $\mathbf q \mapsto r(\mathbf q)$ where $r$ is taken to a fixed element of $\mathrm{SDiff}(D)$.  The \emph{right-action} of the Lie group $R := (\{r\},\circ)$ on $\mathfrak{sdiff}(D)\times \mathcal F(D)^{\alpha+1}$ leaves it unchanged, i.e., $\delta\bar\psi = 0$ and $\delta a = 0$ under this action, which represents a continuous Lie symmetry. Indeed, with Sec.\@~\ref{sec:geoflow} in mind, we compute
\begin{equation}
  X_{\bar\psi}\cdot R = \frac{d}{dt}(\phi^{t}\circ r) \circ(\phi^t\circ r)^{-1} = \dot\phi^{t}\circ(r\circ r^{-1})\circ (\phi^t)^{-1} = X_{\bar\psi},
\end{equation}
so $\bar\psi$ remains unchanged under this action, and
\begin{equation}
  a^t\cdot R = (a^0\circ r)(g^t\circ r)^{-1} = a_0\circ(r\circ r^{-1})\circ (g^t)^{-1} =  a^t.
\end{equation}
In these circumstances, \eqref{eq:J1} reduces to
\begin{equation}
  J^\eta =  \big\langle\mathbf p,\delta \mathbf q\big\rangle_{TD}. 
  \label{eq:J2}
\end{equation}
The above \emph{relabeling symmetry} leaves $\mathscr L$
invariant.  Furthermore, since both $\mathbf q$ and $a$
represent advected quantities, this symmetry leaves the
Clebsch-constrained Lagrangian in \eqref{eq:action} equally unmodified. This provides a connection of conservation of $J^\eta$ in \eqref{eq:J2} with (relabeling) symmetry via Noether's theorem.  Note that
\begin{equation}
  J^\eta = \big\langle[\mathbf p,\mathbf q],\eta\big\rangle_{\mathfrak{sdiff}(D)} = \left\langle\frac{\delta\mathscr L}{\delta\bar\psi}, \eta\right\rangle_{\mathfrak{sdiff}(D)} + \big\langle [a,b], \eta\big\rangle_{\mathfrak{sdiff}(D)} =: J^\eta_1 + J^\eta_2, 
  \label{eq:J3}
\end{equation}
where we have first used that
\begin{equation}
  \delta\mathbf q = - [\eta,\mathbf q]	 
  \label{eq:dq}
\end{equation}
since $\mathbf q$ is an advected quantity, where $\eta = \delta\phi^t\circ (\phi^t)^{-1}$ (cf.\ Sec.\@~\ref{sec:geoflow}), and then the first equation in the top row of \eqref{eq:clebsch}. That $J^\eta$ indeed is preserved under the dynamics, i.e., it represents a \emph{Noether quantity}, can be verified directly.

\begin{proof}
  Note, on one hand, that,
  \begin{align}
	 \dot J^\eta_1 &= \left\langle\partial_t\frac{\delta\mathscr L}{\delta\bar\psi}, \eta\right\rangle_{\mathfrak{sdiff}(D)} + \left\langle\frac{\delta\mathscr L}{\delta\bar\psi}, \delta\bar\psi - [\bar\psi,\eta]\right\rangle_{\mathfrak{sdiff}(D)}\nonumber\\ &= \left\langle-\left[a,\frac{\delta\mathscr L}{\delta a}\right], \eta\right\rangle_{\mathfrak{sdiff}(D)} + \left\langle\frac{\delta\mathscr L}{\delta \bar\psi}, \delta\bar\psi\right\rangle_{\mathfrak{sdiff}(D)}\nonumber\\ &=  \left\langle\frac{\delta\mathscr L}{\delta a}, \delta a\right\rangle_{\mathcal F(D)^{\alpha+1}} + \left\langle\frac{\delta\mathscr L}{\delta \bar\psi}, \delta\bar\psi\right\rangle_{\mathfrak{sdiff}(D)}\\ &= \delta\mathscr L
  \end{align}
  and, on the other, that
  \begin{align}
	 \dot J^\eta_2  &= \left\langle\partial_t[a,b], \eta\right\rangle_{\mathfrak{sdiff}(D)} + \left\langle [a,b], \delta\bar\psi - [\bar\psi,\eta]\right\rangle_{\mathfrak{sdiff}(D)}\nonumber\\ &= \left\langle\partial_t[a,b] + \big[\bar\psi,[a,b]\big], \eta\right\rangle_{\mathfrak{sdiff}(D)} + \left\langle [a,b], \delta\bar\psi\right\rangle_{\mathfrak{sdiff}(D)}\nonumber\\ &= \left\langle \left[a, \frac{\delta\mathscr L}{\delta a}\right], \eta\right\rangle_{\mathfrak{sdiff}(D)} + \left\langle [a,b], \delta\bar\psi\right\rangle_{\mathfrak{sdiff}(D)}\nonumber\\ &= - \left\langle\frac{\delta\mathscr L}{\delta a}, \delta a\right\rangle_{\mathcal F(D)^{\alpha+1}} + \left\langle [a,b], \delta\bar\psi\right\rangle_{\mathfrak{sdiff}(D)}\\ &= - \delta\mathscr L + \left\langle [\mathbf p,\mathbf q], \delta\bar\psi\right\rangle_{\mathfrak{sdiff}(D)},
  \end{align}
  which follow by making use of: the constraints \eqref{eq:cons} on $\bar\psi$ and $\psi_{\sigma^n}$ ($n = 1, 2,\dotsc,\alpha+1$), or, equivalently, on $a$, given in \eqref{eq:dq}; the Euler--Poincar\'e equation for the IL$^{(0,\alpha)}$QG \eqref{eq:EPforILQG}, the equivalent relationships \eqref{eq:clebsch}; and the Jacobi identity satisfied by the canonical Poisson bracket $[\,,\hspace{.01cm}]$. Consequently,
  \begin{equation}
	 \dot J^\eta = \left\langle [\mathbf p,\mathbf q], \delta\bar\psi\right\rangle_{\mathfrak{sdiff}(D)} = 0
  \end{equation}
  by relabeling symmetry, as we wanted to verify.
\end{proof}

But since $\dot{J}^\eta_1 = 0$ and $\dot{J}^\eta_2 = 0$ each independently hold by relabeling symmetry, one can conveniently choose to write:
\begin{equation}
  J^\eta = \big\langle\mathbf p,\delta \mathbf q\big\rangle_{TD} =: \big\langle\mathbf J(\mathbf q,\mathbf p), \eta\big\rangle_{\mathfrak{sdiff}(D)}
\end{equation}
where
\begin{equation}
  \mathbf J(\mathbf q,\mathbf p) = [\mathbf p,\mathbf q] = \frac{\delta\mathscr L}{\delta \bar\psi} = -\bar\xi.
\end{equation}
Then, interpreting $\mathbf p(\mathbf x,t)$ as the \emph{canonical momentum} conjugate to $\mathbf q(\mathbf x,t)$, that is, the inverse map which tells what Lagrangian label occupies Eulerian position $\mathbf x$ at time $t$, the map
\begin{equation}
	\mathbf J : T^*D \to \mathfrak{sdiff}(D)^*  
\end{equation}
defines a \emph{momentum map} for the lift from the cotanget bundle of the manifold where the fluid is contained, $D$, to the dual of the Lie algebra of symplectic diffeomorphisms on $D$, where the IL$^{(0,\alpha)}$QG potential vorticity resides. 

The above provides a framework for constructing a \emph{partial Legendre transform} $(\bar\psi, a) \mapsto (\bar\xi, a)$ by pairing $\mathbf J(\mathbf q,\mathbf p) \in \mathfrak{sdiff}(D)^*$ with $\bar\psi \in \mathfrak{sdiff}(D)$.  This procedure is in fact used in \eqref{eq:H} to define the Hamiltonian of the IL$^{(0,\alpha)}$QG, which transforms the IL$^{(0,\alpha)}$QG from a set of Euler--Poincaré equations with advected quantities on $\mathfrak{sdiff}(D) \times \mathcal{F}(D)^{\alpha+1}$ into a Lie--Poisson system on $\mathfrak{sdiff}(D)^* \ltimes \mathcal{F}(D)^{\alpha+1}$.

We conclude by highlighting that the integral of the momentum map $\mathbf J(\mathbf q,\mathbf p)$ over $D$ corresponds to the Kelvin--Noether circulation (as discussed in Sec.\@~\ref{sec:kelvin}), elucidating the rationale behind its designation.

\subsection{Generalized IL$^{(0,\alpha)}$QG dynamics discovery}

We close by discussing a generalization of the Euler–Poincar\'e/Lie–Poisson framework that might already lead to more accurate dynamics than IL$^{(0,\alpha)}$QG dynamics.  Let $a\in \mathcal F(D)^n$, $n\in \mathbb N$.  Consider the Lagrangian $\mathscr L(\bar\psi,a) \in C^\infty(\mathfrak{sdiff}(D)\times \mathcal F(D)^n)$.  The constrained Hamilton's least action principle
\begin{equation}
  \delta\int_{t_0}^{t_1}\mathscr L(\bar\psi,a)\,dt = 0 : \delta\bar\psi = \partial_t\eta + [\bar\psi,\eta],\,\delta a = -[\eta,a],
\end{equation}
where $\eta\in \mathfrak{sdiff}(D)$ vanishes at the endpoints of integration, leads to the general Euler--Poincar\'e equations with advected quantities on $\mathfrak{sdiff}(D)\times \mathcal F(D)^n$:
\begin{equation}
  \partial_t\frac{\delta \mathscr L}{\delta\bar\psi} + \left[\bar\psi,\frac{\mathscr L}{\delta\bar\psi}\right] = \left[\frac{\delta\mathscr L}{\delta a},a\right],\quad \partial_t a + [\bar\psi,a] = 0.
  \label{eq:EP}
\end{equation}
Let $\mathscr L$ be such that $\frac{\delta}{\delta\bar\psi}\mathscr L = -\bar\xi$.  The corresponding Kelvin--Noether circulation theorem is
\begin{equation}
  \frac{d}{dt}\int_{D_t} \mathbf J\,d^2x = \int_{D_t} \left[\frac{\delta\mathscr L}{\delta a},a\right]\, d^2x,
\end{equation}
where 
\begin{equation}
  \mathbf J = \frac{\delta \mathscr L}{\delta\bar\psi} = -\bar\xi
\end{equation}
is the momentum map that takes $T^*D$ to $\mathfrak{sdiff}(D)^*$. This is employed to construct the partial Legendre transform $(\bar\psi,a) \mapsto (\bar\xi,a)$
\begin{equation}
  \mathscr H(\bar\xi,a) = \langle-\bar\xi,\bar\psi\rangle - \mathscr L(\bar\psi,a),
\end{equation}
which transforms the general Euler--Poincar\'e equations with advected quantities on $\mathfrak{sdiff}(D)\times \mathcal F(D)^n$ \eqref{eq:EP} into a Lie--Poisson system with Hamiltonian $\mathscr H(\bar\xi,a)$ on $\mathfrak{sdiff}(D)^*\ltimes \mathcal F(D)^n$. That is,
\begin{equation}
  \dot{\mathscr F} = \{\mathscr F, \mathscr H\} = \left\langle\bar\xi,\left[\frac{\mathscr F}{\delta\bar\xi},\frac{\mathscr H}{\delta\bar\xi}\right]\right\rangle +  \left\langle a,\left[\frac{\mathscr F}{\delta\bar\xi},\frac{\mathscr H}{\delta a}\right] + \left[\frac{\mathscr F}{\delta a},\frac{\mathscr H}{\delta\bar\xi}\right] \right\rangle
\end{equation}
for any $\mathscr F(\bar\xi,a)$. Specific dynamics, potentially more accurate than IL$^{(0,\alpha)}$QG dynamics, will emerge based on the Lagrangian $\mathscr{L}(\bar{\xi},a)$ and consequently the Hamiltonian $\mathscr{H}(\bar{\xi},a)$ chosen for a specific nonlocal dependence of $\bar{\psi}$ on $(\bar{\xi},a)$. This presents an opportunity for data science methods \cite{Brunton-Kutz-24} to facilitate the discovery of such dynamics.

\section{Summary and outlook}\label{sec:concl}

In this paper, we have demonstrated that the quasigeostrophic approximation of the recently proposed thermal rotating shallow-water equations with stratification can be derived from an Euler--Poincar\'e variational principle. These stratified thermal rotating shallow-water equations feature density (temperature) variations both horizontally and with depth in a polynomial manner, while maintaining the two-dimensional structure of the adiabatic rotating shallow-water equations.

Through the Euler--Poincar\'e variational formulation, we established a Kelvin--Noether theorem for the model, a result previously only known to exist for its primitive-equation counterpart. Furthermore, we illustrated that the model's Lie--Poisson Hamiltonian structure, previously derived via direct calculation, can also be obtained through a Legendre transform. This geometric nature is clarified by an appropriate momentum map.

In a noteworthy advance in our current understanding, we have identified a correspondence between Casimirs and Noether quantities of the Euler--Poincar\'e variational principle. This elucidates the explicit linkage of these conservation laws with the symmetry of the Lagrangian under particle relabeling.

The dual Euler--Poincar\'e/Lie--Poisson formalism offers a unified framework for describing quasigeostrophic stratified thermal flow, analogous to that for the primitive equations.

Future work should aim to establish an equivalent dual Euler--Poincar\'e/Lie--Poisson formulation for the quasigeostrophic equations governing fully three-dimensional, arbitrarily stratified flow. This effort seeks to derive the current model and, more importantly, to identify potential improvements through appropriate truncations of the Lagrangian.  Additionally, exploring the possibility of learning Euler--Poincar\'e/Lie--Poisson dynamics consistent with observational data using interpretable data science tools is also on the agenda.

\section*{Acknowledgments}

We dedicate this paper to the memory of Vladimir Zeitlin, whose work in thermal geophysical flow modeling has contributed to the resurgence of the field.

\section*{Funding}

The authors disclose no financial sponsorships or funding sources for this work.

\section*{Author declarations}

\subsection*{Conflict of interest}

The authors have no conflict of interest to disclose.

\subsection*{Author contributions}

The authors collectively conducted and composed this work.

\section*{Data availability}

This study did not incorporate any empirical data.

\bibliographystyle{alpha}

\end{document}